\newtheorem{theorem}{{\bf Theorem}}
\newtheorem{lemma}{{\bf Lemma}}
\newtheorem{definition}{{\bf Definition}}
\newtheorem{remark}{{\bf Remark}}
\newenvironment{proof}[1][Proof]{\begin{trivlist}
\item[\hskip \labelsep {\bfseries #1}]}{\end{trivlist}}
\newcommand{\qed}{\nobreak \ifvmode \relax \else
      \ifdim\lastskip<1.5em \hskip-\lastskip
      \hskip1.5em plus0em minus0.5em \fi \nobreak
      \vrule height0.75em width0.5em depth0.25em\fi}
\def \rmd{\mathrm{d}}
\begin{document}
\title{Relativistic helicity and link in Minkowski space-time}
\author{
Z. Yoshida$^{1}$,
Y. Kawazura$^{1}$, and
T. Yokoyama$^{2}$
}

\affiliation{
$^{1}$Graduate School of Frontier Sciences, University of Tokyo, Kashiwanoha, Kashiwa, Chiba 277-8561, Japan
\\
$^{2}$Department of Mathematics, Hokkaido University, Kita 10, Nishi 8, Kita-Ku, Sapporo, Hokkaido 060-0810, Japan
}

\begin{abstract}
A relativistic helicity has been formulated in the four-dimensional Minkowski space-time.
Whereas the relativistic distortion of space-time violates the conservation of the conventional helicity,
the newly defined relativistic helicity conserves in
a barotropic fluid or plasma,
dictating a fundamental topological constraint.
The relation between the helicity and the vortex-line topology has been
delineated by analyzing the linking number of vortex filaments
which are singular differential forms representing the pure states of Banach algebra.
While the dimension of space-time is four, vortex filaments link, because vorticities are primarily
2-forms and the corresponding 2-chains link in four dimension;
the relativistic helicity measures the linking number of vortex filaments that are
proper-time cross-sections of the vorticity 2-chains.
A thermodynamic force yields an additional term in the vorticity,
by which the vortex filaments on a reference-time plane are no longer pure states.
However, the vortex filaments on a proper-time plane remain to be pure states, if 
the thermodynamic force is exact (barotropic),
thus, the linking number of vortex filaments conserves.
\end{abstract}

\date{\today}
\maketitle

\newpage
\section{Introduction}
\label{sec:introduction}
The helicity of a vector field is a topological index
measuring the \emph{link}, \emph{twist} and \emph{writhe} of the field lines.
Here we say `index' because it is invariant under the action of diffeomorphism groups
generated by some ideal dynamics.
Conventionally, the helicity is defined for three-dimensional vectors;
that the dimension of space is three is, in fact, essential to define a helicity
(because, as to be formulated in a general setting, the helicity is the integral of a 3-form).
In this work, however, we study four-dimensional vectors in the Minkowski space-time.
In the relativistic dynamics, the conventional helicity is no longer an invariant.
One might connect the non-conservation of the helicity
with the topological fact that loops (1-cycles) do not link in four-dimensional space\,\cite{link}.
However, this is not true;
we will show that the field lines obeying an ideal (barotropic) equation of motion is still subject to the topological constraint;
two field lines do link in the spatial subspaces (temporal cross-sections of the space-time),
and the linking number conserves.
We will formulate a Lorentz-covariant helicity, show its invariance, and delineate its relation to the linking number.
On the other hand, we will find 
the reason why the conventional helicity fails to describe the link in the relativistic space-time.

We start with a short review of the conventional helicity.
Let $\bm{a}$ be a three-dimensional vector field defined on a domain $\Omega\subseteq\mathbb{R}^3$.
We call
\begin{equation}
\bm{b}=\nabla\times\bm{a}
\label{vorticity}
\end{equation}
the \emph{vorticity} of $\bm{a}$.
If $\Omega$ has a boundary $\partial\Omega$, we assume
\begin{equation}
\bm{n}\cdot\bm{b}=0,
\label{BC}
\end{equation}
where $\bm{n}$ is the unit normal vector onto $\partial\Omega$.
We define (assuming integrability)
\begin{equation}
C = 
\int_\Omega \bm{a}\cdot\bm{b}\,\rmd^3 x ,
\label{helicity_conventional}
\end{equation}
and call it the \emph{helicity} of $\bm{b}$ (or, sometimes, of $\bm{a}$) on $\Omega$;
the integrand is called the \emph{helicity density}.

The pioneering use of the helicity in the classical field theory was made by Woltjer\,\cite{Woltjer}
in order to characterize twisted magnetic field lines;
for the magnetic field $\bm{b}$ and its vector potential $\bm{a}$,
obeying the ideal magnetohydrodynamics (MHD) equation
(with the scalar potential $\phi$)
\begin{equation}
\partial_t \bm{a} = \bm{v}\times\bm{b} - \nabla \phi ,
\label{ideal_MHD}
\end{equation}
as well as the boundary condition (\ref{BC}),
the corresponding \emph{magnetic helicity} is a constant of motion.

For a fluid, we define the \emph{fluid helicity} by putting
$\bm{a}=\bm{v}$ (fluid velocity) and $\bm{b}=\bm{\omega}=\nabla\times\bm{v}$ (vorticity).
In an ideal fluid, $\bm{v}$ obeys the evolution equation
\begin{equation}
\partial_t \bm{v} = \bm{v}\times\bm{\omega} - \nabla \varepsilon ,
\label{ideal_fluid}
\end{equation}
where $\varepsilon$ is the total enthalpy
(we may write $\varepsilon=h+v^2/2$ with the static enthalpy $h$; 
in a homentropic flow, $\nabla h = n^{-1}\nabla p$
with the density $n$ and the pressure $p$).
If the boundary condition (\ref{BC}) holds, the fluid helicity is conserved.

The helicity is a topological index characterizing the twists of the bundle of field lines;
since it is an integral over a volume, some different geometrical characteristics of
field lines are summed up in $C$; see\,\cite{Moffatt,Moffatt-Ricca}.
To delineate its topological meaning in the simplest form,
let us consider a pair of \emph{vortex filaments} 
(to be identified as \emph{pure-state} differential forms; see Sec.\,\ref{subsec:pure-state}).
Suppose that $\Gamma_1$ and $\Gamma_2$ are a pair of disjoint loops (closed curves bounding disks) 
of class $C^1$ in $\mathbb{R}^3$.
The unit tangent vectors on these curves are denoted by $\bm{\ell}_1$ and $\bm{\ell}_2$, respectively.
The directions of $\bm{\ell}_1$ and $\bm{\ell}_2$ are arbitrarily chosen, 
and they are attributed to the loops $\Gamma_1$ and $\Gamma_2$ as their \emph{orientations}.
We regard both $\bm{\ell}_1$ and $\bm{\ell}_2$ as $\delta$-measures on the loops $\Gamma_1$ and $\Gamma_2$,
and consider a pair $\bm{b} = \bm{\ell}_1 + \bm{\ell}_2$.
For an arbitrary smooth loop $\Gamma$ that singly links with either $\Gamma_1$ or $\Gamma_2$,
we find, using Stokes's formula
(in the generalized sense for the singular $\bm{b}$; 
see Remark\,\ref{remark:generalize_Stokes_formula}) 
and the definition $\nabla\times\bm{a}=\bm{b}$,
\begin{equation}
{L}(\Gamma,\Gamma_j)=\oint_\Gamma \bm{a}\cdot\bm{\ell}\, \rmd\tau = \int_S \bm{b}\cdot\bm{n}\, \rmd\sigma 
= \mathrm{sgn} (\bm{n}\cdot\bm{\ell}_j)
\quad (j=1,2),
\label{Stokes}
\end{equation}
where $\bm{\ell}$ is the unit tangential vector on $\Gamma$,
$\rmd\tau$ is the length element on $\Gamma$, $S$ is a surface bounded by $\Gamma$, 
$\bm{n}$ is the unit normal vector on $S$, and
$\rmd\sigma$ is the surface element on $S$.
If $\Gamma$ and $\Gamma_j$ do not link, ${L}(\Gamma,\Gamma_j)=0$.
We may consider a more complex link of $\Gamma$ with $\Gamma_j$;
by a homotopical deformation of the path $\Gamma$ of the integral (\ref{Stokes}),
we find that ${L}(\Gamma,\Gamma_j)$ evaluates the \emph{linking number} of the pair of
loops $\Gamma$ and $\Gamma_j$
(the sign of link is determined by the orientations of the loops). 
By the definition, we may write 
$\bm{b}\,\rmd^3 x = \bm{\ell}_1 \rmd\tau_1 + \bm{\ell}_2 \rmd\tau_2$.
Using (\ref{Stokes}), we obtain
\begin{eqnarray}
C = \int_{\mathbb{R}^3} \bm{a}\cdot\bm{b}\, \rmd^3 x
&=& \oint_{\Gamma_1} \bm{a}\cdot\bm{\ell}_1 \rmd\tau_1 + \oint_{\Gamma_2} \bm{a}\cdot\bm{\ell}_2 \rmd\tau_2
\nonumber
\\
&=& 2 {L}(\Gamma_1,\Gamma_2).
\label{Stokes2}
\end{eqnarray}

The linking number in (\ref{Stokes2}) can be evaluated by the \emph{Biot-Savart integral}.
The relation $\bm{b}=\nabla\times\bm{a}$ is inverted as
\begin{equation}
\bm{a}(\bm{x}) = 
\frac{1}{4 \pi} \int_{\mathbb{R}^3} \frac{\bm{b}(\bm{x}')\times(\bm{x} -\bm{x}')}
{|\bm{x} -\bm{x}' |^3 } \rmd^3 x'.
\label{Biot-Savart}
\end{equation}
Inserting $\bm{b}\rmd^3 x' = \bm{\ell}_1 \rmd\tau_1 + \bm{\ell}_2 \rmd\tau_2$
into (\ref{Biot-Savart}), and using the resultant $\bm{a}$ in (\ref{helicity_conventional}),
we obtain
\begin{equation}
C= 2 \times \frac{1}{4 \pi} \oint_{\Gamma_1}\oint_{\Gamma_2} 
\frac{(\bm{x}_1 -\bm{x}_2)\cdot \bm{\ell}_1\rmd\tau_1\times\bm{\ell}_{2}\rmd\tau_2}
{|\bm{x}_1 -\bm{x}_2 |^3 }.
\label{Gauss-1}
\end{equation}
The right-hand-side integral ($= {L}(\Gamma_1,\Gamma_2)$) is called the \emph{Gauss linking number}.
The relation (\ref{Stokes2}) derived by Stokes' formula gives the proof that
${L}(\Gamma_1,\Gamma_2)$ evaluates an integer number counting the link of
two loops $\Gamma_1$ and $\Gamma_2$
(see \cite{Cantarella-Parsley} for generalization to a higher dimension;
see \cite{{Borromean}} for the Gauss integral for three-component links).

We will build a topological theory of relativistic field lines around two new constructions;
the first is an appropriate \emph{relativistic helicity} in the four-dimensional Minkowski space-time,
and the second is the notion of \emph{pure-state vorticities}
by which the helicity reads as the linking number of vortex filaments.
In Sec.\,\ref{sec:preliminaries}, we will start by reviewing the basic equations
that describe the relativistic ideal (barotropic) dynamics of a plasma (charged fluid).
The helicity will be defined for the vorticity of the canonical momentum that
combines the mechanical momentum and the electromagnetic field.
In Sec.\,\ref{sec:R-helicity}, we will formulate the relativistic helicity, and show its conservation.
Section\,\ref{sec:topology} is devoted for the delineation of the topological implication of the helicity conservation.
To this end, we will consider the link of vortex filaments which are formally the
aforementioned $\delta$-measures on co-moving loops.
We will justify them as the pure-states of Banach algebra, and show that they are the 
generalized (weak) solutions of the equation of motion.
For a pair of pure-state vortex filaments, the relativistic helicity 
evaluates their linking number in the spatial subspace, 
and its conservation parallels the relativistically corrected Kelvin's circulation law.

\begin{remark}[generalized Stokes' formula]
\label{remark:generalize_Stokes_formula}
To derive (\ref{Stokes2}), we evaluated the integral
$\oint_{\Gamma_j} \bm{a}\cdot\bm{\ell}_j \rmd\tau_j$ 
(so called \emph{circulation}; see Lemma\,\ref{lemma:circulation_law}) 
along the loop $\Gamma_j$ on which the $\delta$-measure vorticity $\bm{\ell}_j$ is supported,
and picked up the contribution from the other $\delta$-measure vorticity on the loop $\Gamma_k$ ($k\neq j$). 
However, the integrand $\bm{a}$ is not a continuous function, 
because it is generated by the $\delta$-measure $\bm{b}$.
Here are two mathematical issues pertinent to the use of Stokes' formula:
Let us decompose $\bm{a} = \bm{a}_1 +\bm{a}_2$ with
$\nabla\times\bm{a}_j = \bm{\ell}_j$ ($j=1,2$).  
\begin{enumerate}
\item
On the loop $\Gamma_1$, $\bm{a}_2$ is a smooth function,
thus $\oint_{\Gamma_1} \bm{a}_2 \cdot\bm{\ell}_1\rmd\tau_1$ can be
evaluated in the classical sense.
To relate this integral with the ``source'' $\bm{\ell}_2$ of $\bm{a}_2$, however, we
invoke Stokes's formula (\ref{Stokes}) in the generalized sense.
To justify (\ref{Stokes}) for a $\delta$-measure vorticity $\bm{b}=\bm{\ell}_2$,
we first approximate $\bm{\ell}_2$ by a smooth vector field $\bm{\omega}_\epsilon$ 
that gives the same surface integral $\int_{S_1} \bm{\omega}_\epsilon \cdot\bm{n}\,\rmd\sigma$
independent of $\epsilon$,
and path the limit of $\bm{\omega}_\epsilon \rightarrow \bm{\ell}_2$ 
($\epsilon\rightarrow0$) to define
\[
\int_{S_1} \bm{\ell}_2 \cdot\bm{n}\,\rmd\sigma 
= \lim_{\epsilon\rightarrow0} \int_{S_1} \bm{\omega}_\epsilon \cdot\bm{n}\rmd\sigma
= {L}(\Gamma_1,\Gamma_2).
\]

\item
To obtain (\ref{Stokes2}), we estimated
$\oint_{\Gamma_j} \bm{a}_j\cdot\bm{\ell}_j\rmd \tau_j =0$ ($j=1,2$),
which means that ${L}(\Gamma_j,\Gamma_j)=0$.  
In the neighborhood of each loop $\Gamma_j$, however, $\bm{a}_j$ is not continuous.
To justify these integrals, we consider a homotopy sequence of loops 
$\Gamma_\epsilon\rightarrow \Gamma_j$, and define
\[
\oint_{\Gamma_j} \bm{a}_j\cdot\bm{\ell}_j\rmd \tau_j = \lim_{\epsilon\rightarrow0}
\oint_{\Gamma_\epsilon} \bm{a}_j\cdot\bm{\ell}_\epsilon\rmd \tau_\epsilon = 0.
\]

\end{enumerate}

\end{remark}

\section{Preliminaries}
\label{sec:preliminaries}

\subsection{Basic Definitions}
\label{subsec:basic_definitions}
We denote the Minkowski space-time by $M\cong\mathbb{R}^4$.
On a reference frame, we may decompose $M=T\times X$ with $T\cong\mathbb{R}$ (time) and
$X\cong\mathbb{R}^3$ (space).
Following the standard notation, we write
\[
x^\mu=(ct,x,y,z), \quad x_\mu=(ct,-x,-y,-z),
\]
where $c$ is the speed of light. 
By a metric tensor
\begin{eqnarray}
	g^{\mu\nu} = 
  \left(
    \begin{array}{cccc}
			1 &  0 &  0 &  0\\
			0 & -1 &  0 &  0\\
			0 &  0 & -1 &  0\\
			0 &  0 &  0 & -1
    \end{array}
  \right)
	\label{metric}, 
\end{eqnarray}
we can write $x^\mu = g^{\mu\nu}x_\nu$.
The space-time gradients are denoted by
\[
\partial_\mu = \frac{\partial}{\partial x^\mu} = \left(\frac{\partial}{c\partial t}, \nabla\right),
\quad
\partial^\mu = \frac{\partial}{\partial x_\mu} = \left(\frac{\partial}{c\partial t}, -\nabla\right).
\]
The relativistic 4-velocity (normalized by $c$) is defined by the proper-time derivative:
\begin{equation}
{U}^\mu = \frac{\rmd x^\mu}{\rmd s} =(\gamma , \gamma\bm{v}/c), \quad
{U}_\mu = \frac{\rmd x_\mu}{\rmd s} =(\gamma , -\gamma\bm{v}/c),
\label{relativistic_4-velocity}
\end{equation}
where $\rmd s^2=\rmd x^\mu \rmd x_\mu$ and $\gamma=1/\sqrt{1-v^2/c^2}$.  
Obviously, ${U}^\mu {U}_\mu =1$.

The 4-momentum of a particle is
$ mc\,{U}^\mu  =(\gamma m c, \gamma m \bm{v})$,
where $m$ is the rest mass of the particle.
For a fluid, the \emph{effective rest mass} is given by $h/c^2$
with a proper (static) molar enthalpy $h$.
We may write, on a rest frame,
\[
h= \varepsilon + n^{-1}p,
\]
where $n$ is the number density, $p$ is the pressure, and 
$\varepsilon$ is the internal energy that includes the rest mass energy $mc^2$ as well as the thermal energy\,\cite{LandauLifshitz}.
The fluid 4-momentum is
\begin{equation}
{P}^\mu = (h/c)\,{U}^\mu =(\gamma h/c , \gamma (h/c^2) \bm{v}).
\label{fluid-4-momentum}
\end{equation}
Obviously, 
\begin{equation}
c U_\mu{P}^\mu = h.
\label{enthalpy}
\end{equation}

\begin{remark}[non-relativistic limit]
\label{remark:NR}
In the non-relativistic (NR) limit ($\gamma\rightarrow 1$), the 4-velocity 
(\ref{relativistic_4-velocity})
coincides with the reference-frame velocity:
\begin{equation}
u^\mu=c^{-1}\frac{\rmd x^\mu}{\rmd t} = (1, \bm{v}/c), \quad 
u_\mu=c^{-1}\frac{\rmd x_\mu}{\rmd t} = (1, -\bm{v}/c).
\label{NR_4-velocity}
\end{equation}
The NR particle 4-momentum is
$(mv^2/(2c), m\bm{v})$
(in the 0-component, we have subtracted the rest-mass energy $mc^2$ from the energy:
$\gamma m c^2 \rightarrow mc^2 + mv^2/2$).
The NR fluid momentum 4-vector is
\begin{equation}
P_{NR}^\mu =  (H/c, m\bm{v}),
\label{fluid-4-momentum-NR}
\end{equation}
where
\begin{equation}
H = \frac{mv^2}{2} + \varepsilon + n^{-1}p = \frac{mv^2}{2} + h.
\label{fluid-4-momentum-NR'}
\end{equation}
We need to assume $mv^2/2\ll h$ to approximate
$cu_\mu P_{NR}^\mu = h-mv^2/2 \approx h$.
\end{remark}

\subsection{Field tensor and equation of motion}
\label{subsec:field-tensor}
The energy-momentum tensor of a fluid is
\[
T_{\mu\nu}= {n h} U_\mu U_\nu - p g_{\mu\nu}.
\]
The quasi-static (entropy-conserving) equation of motion is derived by 
$\partial^\nu T_{\mu\nu}=0$, which reads 
\begin{equation}
c{U}^\mu {M}_{\mu\nu} + \partial_\nu h - n^{-1} \partial_\nu p = 0 ,
\label{EQM-tensor}
\end{equation}
where 
\[
{M}_{\mu\nu} = \partial_\mu {P}_\nu - \partial_\nu {P}_\mu.
\]
is the \emph{matter field tensor}.
By the thermodynamic first and second laws, we may write,
for an isentropic flow,
\begin{equation}
\partial_\nu h = n^{-1} \partial_\nu p + T \partial_\nu S
\label{thermodynamic1+2}
\end{equation}
with the entropy $S$.  Hence, the equation of motion (\ref{EQM-tensor}) is rewritten as\,\cite{LandauLifshitz}
\begin{equation}
{U}^\mu {M}_{\mu\nu}  = -c^{-1} T \partial_\nu S . 
\label{EQM-tensor'}
\end{equation}
Contracting both sides of (\ref{EQM-tensor'}) with $U^\nu$, we obtain
$T U^\nu \partial_\nu S=0$, implying the entropy conservation
($\partial_t S + \bm{v}\cdot\nabla S=0$ on a reference frame).

\subsection{Barotropic fluid}
\label{subsec:barotropic}

In the present work, \emph{barotropic fluids} will be 
at the center of discussions;
when $S$ is a function of $T$, we may write $T\rmd S = \rmd \theta$,
and then, (\ref{EQM-tensor'}) reads
\begin{equation}
{U}^\mu {M}_{\mu\nu}  = - c^{-1} \partial_\nu \theta . 
\label{EQM-tensor-barotropic}
\end{equation}
Evidently, $U^\nu \partial_\nu \theta=0$.

Notice that the heat term $T\rmd S$ is reduced into an exact differential $\rmd\theta$.
This is the root cause of various topological constraints in a barotropic flow.
For example, Kelvin's circulation theorem (see Lemma\,\ref{lemma:circulation_law}) is
a consequence of vanishing heat cycle $\oint T\rmd S = \oint\rmd\theta=0$
(see \cite{MahajanYoshida2010} for the relation between the circulation law and
heat cycles, as well as its relativistic generalization).
However, the exactness of $\rmd\theta$ is now in the relativistic space-time;
its representation on a reference frame may be non-exact. 
This difference causes an interesting ``relativistic effect'' which is, indeed, the
target of the present exploration.  

Henceforth, we will assume a barotropic relation, and discuss the equation of motion in
the form of (\ref{EQM-tensor-barotropic});
one may generalize the following formulations to a \emph{baroclinic} fluid by replacing
$\rmd\theta$ by $T\rmd S$.

\begin{remark}[NR barotropic fluid]
\label{remark:NR-barotropic}
It is remarkable that, in the NR limit, the barotropic heat term $\partial^\nu \theta$ can be
absorbed by the energy term (0-component) of the fluid 4-momentum:
modifying (\ref{fluid-4-momentum-NR'}) as
\begin{equation}
H =  \frac{mv^2}{2} + h + \theta,
\label{fluid-4-momentum-NR''}
\end{equation}
and redefining the field tensor ${M}$ by the modified 4-momentum, the
NR equation of motion (\ref{EQM-tensor-barotropic}) can be reduced to
\begin{equation}
{u}^\mu {M}_{\mu\nu}  = 0 . 
\label{EQM-tensor-barotropic'}
\end{equation}
Hence, a NR barotropic fluid equation is equivalent to a homentropic one ($\rmd S=0$).
However, such reduction is not possible in the relativistic regime.    
\end{remark}

\subsection{Charged fluid (plasma)}
\label{subsec:plasma}
For a charged fluid (plasma), we have to 
dress the momentum 4-vector with
the electromagnetic (EM) potential $A^\mu=(\varphi, \bm{A})$, and replace the 4-momentum by the
\emph{canonical 4-momentum}
\[
\mathcal{P} = {P}+q A,
\]
where $q$ is the charge.
The matter-EM field tensor (2-form) is\,\cite{Mahajan2003}
\[
{\mathcal{M}}_{\mu\nu} 
=(\partial_\mu {P}_\nu - \partial_\nu {P}_\mu) +
q(\partial_\mu {A}_\nu - \partial_\nu {A}_\mu) ,
\]
by which the equation of motion (\ref{EQM-tensor-barotropic}) modifies to include the
Lorentz force as
\begin{equation}
{U}^\mu \mathcal{M}_{\mu\nu}  
=-c^{-1}\partial_\nu\theta .
\label{EQM-tensor-plasma}
\end{equation}
The relation (\ref{enthalpy}) of a neutral-fluid is generalized as
\begin{equation}
U^\mu\mathcal{P}_\mu = c^{-1}(h + q \varrho)
\quad (\varrho= U_\mu A^\mu).
\label{enthalpy2}
\end{equation}

Henceforth, we will consider the EM-dressed (or canonical) momentum $\mathcal{P}^\mu$
and its derivatives (field tensor, helicity, etc.).
Synonymously, we may call it a matter-dressed EM potential, i.e.,
the EM potential $A^\mu$ (multiplied by the charge $q$) coupled with the matter motion $P^\mu$,
which will be denote by $\mathcal{A}^\mu$.
The latter EM-based notation will be useful when we compare newly defined helicities
with the familiar magnetic helicity (see Sec.\,\ref{subsec:tensor}).
The conventional MHD equation (\ref{ideal_MHD}) is the zero mass and heat
($P=0$ and $\theta=0$) limit of the spatial component of (\ref{EQM-tensor-plasma}).

\subsection{Representations on a reference frame}
\label{subsec:tensor}

For the convenience of the forthcoming calculations,
we write down the components of the field tensors
and the equations of motion on a reference frame.
Here we invoke an analogy of the familiar Faraday tensor to denote the
elements of the field tensors.
We write the spacial components in bold-face symbols.
Denoting the canonical momentum (or the dressed EM potential) by
\begin{equation}
\mathcal{P}^\mu 
=(\mathcal{P}^0,\bm{\mathcal{P}})
\equiv \mathcal{A}^\mu = (\mathcal{A}^0,\bm{\mathcal{A}}),
\label{P-components}
\end{equation}
we define
\begin{eqnarray*}
& & \bm{\mathcal{E}} = - \nabla \mathcal{A}^0 - c^{-1}\partial_t \bm{\mathcal{A}} ,
\\
& & \bm{\mathcal{B}} = \nabla\times \bm{\mathcal{A}}.
\end{eqnarray*}
Then, the field tensors are written as
\begin{eqnarray}
\mathcal{M}^{\mu\nu} &=&
\partial^\mu \mathcal{P}^\nu - \partial^\nu \mathcal{P}^\mu
= \left( \begin{array}{cccc}
0             & -\mathcal{E}_1 & -\mathcal{E}_2 & -\mathcal{E}_3 \\
\mathcal{E}_1 &  0             & -\mathcal{B}_3 &  \mathcal{B}_2 \\
\mathcal{E}_2 &  \mathcal{B}_3 &  0             & -\mathcal{B}_1 \\
\mathcal{E}_3 & -\mathcal{B}_2 &  \mathcal{B}_1 &  0             \\
\end{array} \right) ,
\label{M-components_up}
\\
\mathcal{M}_{\mu\nu} &=&
\partial_\mu \mathcal{P}_\nu - \partial_\nu \mathcal{P}_\mu
= \left( \begin{array}{cccc}
 0             &  \mathcal{E}_1 &  \mathcal{E}_2 &  \mathcal{E}_3 \\
-\mathcal{E}_1 &  0             & -\mathcal{B}_3 &  \mathcal{B}_2 \\
-\mathcal{E}_2 &  \mathcal{B}_3 &  0             & -\mathcal{B}_1 \\
-\mathcal{E}_3 & -\mathcal{B}_2 &  \mathcal{B}_1 &  0             \\
\end{array} \right) ,
\label{M-components_down}
\\
\mathcal{M}^{*\mu\nu} &=&
\frac{1}{2}\epsilon^{\mu\nu\alpha\beta} \mathcal{M}_{\alpha\beta} 
= \left( \begin{array}{cccc}
 0            & -\mathcal{B}_1 & -\mathcal{B}_2 & -\mathcal{B}_3 \\
\mathcal{B}_1 &  0             &  \mathcal{E}_3 & -\mathcal{E}_2 \\
\mathcal{B}_2 & -\mathcal{E}_3 &  0             &  \mathcal{E}_1 \\
\mathcal{B}_3 &  \mathcal{E}_2 & -\mathcal{E}_1 &  0             \\
\end{array} \right) .
\label{*M-components_down}
\end{eqnarray}
We may identify $\mathcal{M}^{*\mu\nu}$ as the Minkowski-Hodge dual of $\mathcal{M}_{\mu\nu}$;
see Remark\,\ref{remark:MH-dual}.
We have an obvious relation
\begin{equation}
	\partial_\mu\mathcal{M}^{*\mu\nu} 
= (\nabla\cdot\bm{\mathcal{B}},\, -c^{-1}\partial_t \bm{\mathcal{B}} - \nabla\times\bm{\mathcal{E}}) = 0.
\label{div Mstar}
\end{equation}
The equation of motion (\ref{EQM-tensor-barotropic}) reads
\begin{eqnarray}
& & \bm{v}\cdot\bm{\mathcal{E}}
=  (c\gamma)^{-1} \partial_t \theta , 
\label{EQM-vector-0}
\\
& & \bm{\mathcal{E}}+(\bm{v}/c)\times\bm{\mathcal{B}} 
= -(c\gamma)^{-1} \nabla\theta .
\label{EQM-vector-1}
\end{eqnarray}

\subsection{Differential forms}
\label{subsec:forms}

In the following analysis, the differential-geometric formulations
of the field tensors and the equation of motion will play an essential role.
The 4-velocity $U^\mu$ is regarded as a vector field $U=U^\mu\partial_\mu
\in T M$ ($M$ is the Minkwski space-time, and $TM$ is the tangent bundle on $M$).
We denote the corresponding covector (1-form) by
$\mathcal{U}=U_\mu \rmd x^\mu \in T^* M$ (the cotangent bundle on $M$).
The 4-momentum is a 1-form: $\mathcal{P}=\mathcal{P}_\mu\rmd x^\mu \in T^* M$, 
and its exterior derivative is the field tensor:
\begin{eqnarray}
\mathcal{M}=\rmd \mathcal{P}
&=& \partial_\mu \mathcal{P}_\nu \rmd x^\mu\wedge\rmd x^\nu
\label{field_tensor} \\
&=& \frac{1}{2} \mathcal{M}_{\mu\nu} \rmd x^\mu\wedge\rmd x^\nu .
\label{field_tensor'}
\end{eqnarray}
The components of $*\mathcal{M}$ (the Minkowski-Hodge-dual of $\mathcal{M}$) will be denoted by $\mathcal{M}^*_{\alpha\beta}$,
i.e. $*\mathcal{M}= (1/2)\mathcal{M}^{*\alpha\beta}\rmd x_\alpha\wedge\rmd x_\beta$;
see (\ref{*M-components_down}).
Using the definition (see Remark\,\ref{remark:MH-dual})
\begin{equation}
	\frac{1}{2}\epsilon^{\alpha\beta\mu\nu}\, \rmd x_\alpha \wedge \rmd x_\beta = *(\rmd x^\mu \wedge \rmd x^\nu) ,
\label{2-form basis identity}
\end{equation}
we may write
\begin{equation}
	*\mathcal{M}= \frac{1}{2}\mathcal{M}_{\mu\nu}\, *(\rmd x^\mu\wedge\rmd x^\nu).
\label{*M}
\end{equation}

Denoting by $i_U$ the interior product with $U$, we write
$i_U\mathcal{M}=U^\mu\mathcal{M}_{\mu\nu}\rmd x^\nu$.
The equation of motion (\ref{EQM-tensor-plasma}) is now written as
\begin{equation}
i_U\mathcal{M} =-c^{-1}\rmd \theta . 
\label{EQM-dif-form}
\end{equation}

Operating the Lie derivative $L_U = i_U \rmd + \rmd i_U$ on the momentum 1-form $\mathcal{P}$,
we obtain
\begin{equation}
L_U \mathcal{P} = i_U \rmd \mathcal{P} + \rmd i_U\mathcal{P}
= i_U \mathcal{M} + c^{-1}\rmd (h+q\varrho),
\label{Lie-derivative}
\end{equation}
where we have used the relation (\ref{enthalpy2}).
Hence (\ref{EQM-dif-form}) reads
\begin{equation}
L_U \mathcal{P} 
= c^{-1}\rmd (h+q\varrho-\theta).
\label{EQM-Lie-form}
\end{equation}
The right-hand side is the exact differential of the \emph{free energy}.
In the following discussion, we will use the form (\ref{EQM-Lie-form}) of the equation of motion.

The 2-form $\mathcal{M}=\rmd\mathcal{P}$ may be called the ``four-dimensional vorticity'' of the canonical momentum.
The relativistic vorticity equation is derived from (\ref{EQM-dif-form}):
\begin{equation}
 L_U\mathcal{M} = \rmd i_U \mathcal{M} = 0 ,
\label{VD-Lie-form}
\end{equation}

\begin{remark}[Minkowski-Hodge dual] 
\label{remark:MH-dual}
Combining the Hodge-duality (between $p$-forms and 
$(n-p)$-forms)
and the Minkowski-duality (between $\rmd x^\mu$ and $\rmd x_\mu$), 
the ``star operator'' maps the basis as
\begin{eqnarray*}
& &
*(\rmd x^{i_1} \wedge \cdots \wedge \rmd x^{i_p}) 
\\
& &~~~= \frac{1}{(n - p)!}\epsilon^{i_1 \cdots i_p}_{\hspace{2.5em}j_1 \cdots j_{n-p}} g^{j_1 k_1} \cdots g^{j_{(n-p)} k_{(n-p)}} \rmd x_{k_1} \wedge \cdots \wedge \rmd x_{k_{(n-p)}}.
\end{eqnarray*}
For example, if $\mathcal{M}$ is a 2-form based on $\rmd x^\mu\wedge\rmd x^\nu$,
the Minkowski-Hodge dual $*\mathcal{M}$ is a 2-form  based on $\rmd x_\mu\wedge\rmd x_\nu$;
see (\ref{*M-components_down}), (\ref{2-form basis identity})
and (\ref{*M}).

\end{remark}

\subsection{Diffeomorphism generated by 4-velocity}
\label{subsec:diffeo}

The ideal fluid/plasma motion is represented by a diffeomorphism $\mathcal{T}_U(s)$
that is generated by a vector $U$ on the Minkowski space-time $M\cong \mathbb{R}^4$, i.e.,
\begin{equation}
\frac{\rmd}{\rmd s} \mathcal{T}_U(s) = U.
\label{diffeomorphism}
\end{equation}
For every initial point $x^\mu(0)\in M$, 
$\mathcal{T}_U(s) x^\mu(0) = x^\mu(s)$ and 
$\rmd x^\mu(s)/\rmd s = U^\mu$.
We assume that $\{\mathcal{T}_U(s);\,s\in\mathbb{R}\}$ is a
one-parameter group of $C^2$-class diffeomorphisms that are
continuously differentiable with respect to $s$.

By $\mathcal{T}_U(s)$, we can define various geometrical objects (chains) ``co-moving'' with the
fluid.  
For the latter convenience, we introduce two different temporal cross-sections
of the space-time $M$ (which we call planes, although they have three dimensions).
The ``$t$-plane'' is, for a fixed parameter $t\in\mathbb{R}$,
\begin{equation}
\Xi(t)=\{(x^0, x^1,x^2,x^3);\, x^0=ct,\, (x^1,x^2,x^3)\in X \}.
\label{t-plane}
\end{equation}
In the relativistic theory, 
the proper time $s$-plane plays a more essential role,
which is a three-dimensional hyper-surface in $M$ defined by
\begin{equation}
\tilde{\Xi}(s) = \mathcal{T}_U(s) \Xi(0).
\label{s-plane}
\end{equation}
Evey object that is initially contained in $\Xi(0)$ 
stays in $\tilde{\Xi}(s)$ ---such an object is said \emph{co-moving}.
We note that a co-moving object does not stay on a $t$-plane $\Xi(t)$;
the {synchroneity} (simultaneity) is broken when $\gamma$ is inhomogeneous.
A co-moving object $V(s)\subset\tilde{\Xi}(s)$ is \emph{temporally-thin}, i.e.,
$V(s)\cap V(s')=\emptyset$ for every $s\neq s'$,
or $V(s)$ has zero measure with respect to $x^0$.
This is because the diffeomorphism $\mathcal{T}_U(s)$ is {dynamical}
in the sense that $U^0 > 0$, and thus,
$\mathcal{T}_U(s)$ ($s\neq0$) does not have any fixed point in $M$.

Before starting analyses, we make a remark on the framework of our discussions.
The vector $U$, representing the velocity of fluid elements, must be consistent
with the evolution of other physical quantities
$\mathcal{P}$, $h$, and $\theta$.
We do not know, however, a theorem guaranteeing the global existence 
of solutions to the nonlinear system of equations;
it is not possible to construct self-consistent fields $U$, $\mathcal{P}$, $\mathcal{M}$, etc. 
by solving the evolution equations.  
Yet, we may discuss \emph{a priori} properties of the
solutions, i.e., the mathematical relations that must be satisfied by every solution
(of certain class) whenever it exists.
The aim of this study is to derive a topological conservation law pertinent to
the diffeomorphism group $\{\mathcal{T}_U(s);\,s\in\mathbb{R}\}$.
The helicity will guide our exploration.

\section{Relativistic helicity in the Minkowski space-time}
\label{sec:R-helicity}

\subsection{Semi-relativistic helicity}
\label{subsec:semi-relativistic}
Conventionally, a helicity is defined by a three-dimensional integral
$C = \int_X \bm{a}\cdot\bm{b}\,\rmd^3 x$
with three-dimensional vectors $\bm{a}$ and $\bm{b}=\nabla\times\bm{a}$ defined on space $X$; 
see (\ref{helicity_conventional}).
Here, we consider the helicity of ``matter-dressed EM field''
(or, synonymously,  ``EM-dressed fluid momentum'') by inserting $\bm{\mathcal{A}}$ ($=\bm{\mathcal{P}}$)
into $\bm{a}$, and $\bm{\mathcal{B}}$ into $\bm{b}$;
see Sec.\,\ref{subsec:tensor}.
By straightforward extension of the well-known magnetic-helicity conservation
in an ideal fluid obeying (\ref{ideal_MHD}) (or the fluid-helicity conservation by (\ref{ideal_fluid})), we find that $C=\int_X \bm{\mathcal{A}}\cdot\bm{\mathcal{B}}\,\rmd^3x$ is conserved in a non-relativistic 
barotropic charged fluid (plasma).
In relativistic dynamics, however, $C$ is no longer a constant;
the relativistic effect yields a ``relativistic baroclinic effect'' 
by space-time distortion\,\cite{MahajanYoshida2010}.
The aim of the latter discussion is 
to formulate a more appropriate helicity that conserves in relativistic motion, and
use it to elucidate a fundamental topological property of relativistic dynamics.

Our construction starts by relating the helicity to a 3-form such as
\begin{equation}
\mathcal{K} = \mathcal{P}\wedge \rmd \mathcal{P} .
\label{K}
\end{equation}
Explicitly, we may write
\begin{eqnarray}
	\mathcal{K} 
&=& \frac{1}{2}\mathcal{P}_\nu\mathcal{M}_{\alpha\beta}\rmd x^\nu \wedge \rmd x^\alpha \wedge \rmd x^\beta
\nonumber \\
&=& -{\mathcal{P}}_\nu \mathcal{M}^{*\mu\nu}\,*(\rmd x_\mu) ,
\label{K'}
\end{eqnarray}
where we used the identity $*(\epsilon^{\mu\nu\alpha\beta}\rmd x_\mu) = -\rmd x^\nu \wedge \rmd x^\alpha \wedge \rmd x^\beta$.
Inserting (\ref{P-components}) and (\ref{*M-components_down}), we may write
$\mathcal{K} = -\mathcal{K}^\mu\,*(\rmd x_\mu)$ with
\begin{eqnarray*}
	\mathcal{K}^0 & = & {\mathcal{P}}_\nu \mathcal{M}^{*0\nu} = \bm{\mathcal{A}}\cdot\bm{\mathcal{B}} , \\
	\mathcal{K}^j & = & {\mathcal{P}}_\nu \mathcal{M}^{*j\nu} = \left(\mathcal{A}_0\bm{\mathcal{B}} - \bm{\mathcal{A}}\times\bm{\mathcal{B}}\right)^j.
\end{eqnarray*}
Hence, we may write the conventional helicity as
\begin{equation}
C=\int_X \mathcal{K}^0 \rmd^3 x = \int_X {\mathcal{P}}_\nu \mathcal{M}^{*0\nu} \rmd^3 x.
\label{semi-classical-helicity}
\end{equation}
As the domain of integration (the $t$-plane $X$) is not Lorentz covariant,
$C$ depends on the choice of a reference frame,
Therefore, we call $C$ a \emph{semi-relativistic helicity}.
A fully relativistic helicity must be defined by
both covariant integrand and a domain of integration
---this will be done in the next subsection.

Here we examine $\rmd C/\rmd t$ explicitly.
Using (\ref{div Mstar}) and (\ref{EQM-vector-1}), 
we obtain
\begin{equation}
	\partial_\mu \mathcal{K}^\mu = {\cal M}_{\mu\nu}{\cal M}^{*\mu\nu} 
= -2 \bm{\mathcal{E}}\cdot\bm{\mathcal{B}} = 2 (c\gamma)^{-1} \bm{\mathcal{B}}\cdot\nabla \theta.
\label{e:div K}
\end{equation}
We thus find
\begin{eqnarray}
\frac{\rmd}{\rmd t}C 
&=& c\int_X \partial_0 \mathcal{K}^0 \rmd^3x
\nonumber \\
&=& c \int_X \left( \partial_\mu \mathcal{K}^\mu - \partial_j \mathcal{K}^j \right) \rmd^3x
\nonumber \\
&=& c \int_X \partial_\mu \mathcal{K}^\mu \rmd^3x
\nonumber \\
&=& 2 \int_X \gamma^{-1}\bm{\mathcal{B}}\cdot\nabla\theta\,\rmd^3 x
= -2 \int_X \theta\bm{\mathcal{B}}\cdot\nabla\gamma^{-1}\,\rmd^3 x.
\label{semi-classical-helicity-change}
\end{eqnarray}
Obviously $C$ is invariant in the non-relativistic limit ($\gamma=1$);
we now have a unified non-relativistic helicity conservation law that combines
the magnetic-helicity conservation and fluid-helicity conservation
introduced in Sec.\,\ref{sec:introduction}.
Interestingly, in a homentropic fluid ($\rmd\theta=0$), 
$\partial_\mu\mathcal{K}^\mu=0$,
thus, $C$ is constant regardless of the relativistic effect.
Otherwise, the semi-relativistic helicity $C$ is not a constant of motion.

\subsection{Relativistic helicity}
\label{subsec:R-helicity}

We define a \emph{relativistic helicity} $\mathfrak{C}$ in the Minkowski space-time by the integral of the 3-form $\mathcal{K}$ over
a co-moving three-dimensional volume $V(s) = \mathcal{T}_U(s)V_0$
($V_0 \subset \Xi(0)$; see Sec.\,\ref{subsec:diffeo}):
\begin{equation}
\mathfrak{C}(s) 
= \int_{V(s)} \mathcal{P}\wedge \rmd \mathcal{P} .
\label{4Dhelicity}
\end{equation}
Notice that the integrand 3-form $\mathcal{K}=\mathcal{P}\wedge \rmd \mathcal{P}$ includes
space-time coupled terms ($\sum_{jk\ell}\mathcal{K}_j\rmd x^0\wedge\rmd x^k\wedge\rmd x^\ell$),
and the synchroneity (simultaneity) may be broken on $V(s)$.

\begin{theorem}[relativistic helicity conservation]
\label{theorem:helicity_conservation}
Suppose that $\mathcal{P}~(\in C^2(M))$ is a
solution of (\ref{EQM-Lie-form}) such that $\mathrm{supp}\,\rmd\mathcal{P}\cap\Xi(0)\subsetneq V_0$ (a bounded set).
The helicity $\mathfrak{C}(s)$ 
evaluated on a co-moving domain $V(s)=\mathcal{T}_U(s)V_0$ is a constant of motion
(i.e. $\rmd\mathfrak{C}(s)/\rmd s=0$).
\end{theorem}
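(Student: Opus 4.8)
The plan is to differentiate $\mathfrak{C}(s)$ by transporting the integral to the fixed reference domain $V_0$, to rewrite the resulting integrand via the equation of motion so that it becomes an exact $4$-form, and finally to discard the boundary term using the support hypothesis. Since $V(s)=\mathcal{T}_U(s)V_0$, I would first write
\begin{equation*}
\mathfrak{C}(s)=\int_{V(s)}\mathcal{K}=\int_{V_0}\mathcal{T}_U(s)^*\mathcal{K},\qquad \mathcal{K}=\mathcal{P}\wedge\rmd\mathcal{P}.
\end{equation*}
Because $\mathcal{P}\in C^2(M)$ and $\{\mathcal{T}_U(s)\}$ is a one-parameter group of $C^2$ diffeomorphisms, I can differentiate under the now $s$-independent integral sign and use the defining relation $\frac{\rmd}{\rmd s}\mathcal{T}_U(s)^*\omega=\mathcal{T}_U(s)^*L_U\omega$ for the Lie derivative, which yields the transport identity
\begin{equation*}
\frac{\rmd}{\rmd s}\mathfrak{C}(s)=\int_{V_0}\mathcal{T}_U(s)^*L_U\mathcal{K}=\int_{V(s)}L_U\mathcal{K}.
\end{equation*}

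Next I would evaluate $L_U\mathcal{K}$ through the Leibniz rule for the Lie derivative together with the dynamical relations already established. Setting $f=c^{-1}(h+q\varrho-\theta)$, the equation of motion (\ref{EQM-Lie-form}) gives $L_U\mathcal{P}=\rmd f$, while the vorticity equation (\ref{VD-Lie-form}) gives $L_U\rmd\mathcal{P}=L_U\mathcal{M}=0$. Hence
\begin{equation*}
L_U\mathcal{K}=(L_U\mathcal{P})\wedge\rmd\mathcal{P}+\mathcal{P}\wedge L_U\rmd\mathcal{P}=\rmd f\wedge\mathcal{M}.
\end{equation*}
Since $\mathcal{M}=\rmd\mathcal{P}$ is closed, $\rmd f\wedge\mathcal{M}=\rmd(f\,\mathcal{M})$ is exact, so Stokes' theorem reduces the derivative to a boundary integral,
\begin{equation*}
\frac{\rmd}{\rmd s}\mathfrak{C}(s)=\int_{V(s)}\rmd(f\,\mathcal{M})=\int_{\partial V(s)}f\,\mathcal{M}.
\end{equation*}

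It then remains to show this boundary integral vanishes, and this is the step I expect to carry the real content of the theorem. The hypothesis is that $\mathrm{supp}\,\rmd\mathcal{P}$ meets $\Xi(0)$ in a set compactly contained in $V_0$. Because the vorticity is Lie-dragged, $L_U\mathcal{M}=0$, its support is merely advected by the flow: the invariance $\mathcal{T}_U(s)^*\mathcal{M}=\mathcal{M}$ forces $\mathrm{supp}\,\mathcal{M}\cap\tilde{\Xi}(s)=\mathcal{T}_U(s)\big(\mathrm{supp}\,\mathcal{M}\cap\Xi(0)\big)$, which therefore remains compactly contained in $\mathcal{T}_U(s)V_0=V(s)$. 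Consequently $\mathcal{M}$ vanishes on $\partial V(s)$, the boundary integral is zero, and $\rmd\mathfrak{C}(s)/\rmd s=0$. The main obstacle is thus not the algebra but the careful justification that the diffeomorphism keeps $\mathrm{supp}\,\mathcal{M}$ strictly interior to the co-moving domain for all $s$; this rests on $\{\mathcal{T}_U(s)\}$ being a group of diffeomorphisms, so that it preserves compact containment, and on the invariance $L_U\mathcal{M}=0$, which is precisely the relativistic vorticity equation (\ref{VD-Lie-form}). I would also confirm that the transport identity applies verbatim even though $\mathcal{K}$ contains the space-time mixed components noted after (\ref{4Dhelicity}); since the whole argument is intrinsic, using only the pullback and Cartan's formula, the broken simultaneity on $V(s)$ causes no difficulty.
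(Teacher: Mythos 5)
Your proposal is correct, and the main line of the argument --- pulling back to $V_0$, differentiating with the Lie derivative, applying the Leibniz rule together with $L_U\mathcal{P}=\rmd f$ and $L_U\rmd\mathcal{P}=0$ to obtain the exact form $\rmd(f\,\rmd\mathcal{P})$, and reducing to a boundary integral by Stokes' theorem --- is exactly the paper's computation. Where you diverge is in the step you correctly identify as carrying the real content: showing that $\mathrm{supp}\,\rmd\mathcal{P}$ stays strictly inside the co-moving domain. You deduce this directly from the pointwise flow-invariance $\mathcal{T}_U(s)^*\mathcal{M}=\mathcal{M}$ (a consequence of $L_U\mathcal{M}=0$ and the group property), which makes the zero set of $\mathcal{M}$, hence its support, invariant under the diffeomorphism, so that $\mathrm{supp}\,\mathcal{M}\cap\tilde{\Xi}(s)=\mathcal{T}_U(s)\bigl(\mathrm{supp}\,\mathcal{M}\cap\Xi(0)\bigr)$. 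The paper instead establishes the same statement (its Lemma on vortex motion) by way of the relativistic circulation law: a co-moving surface initially disjoint from the support carries zero circulation for all $s$, hence $\mathcal{M}$ vanishes there. Your route is more direct and entirely adequate under the stated $C^2$ hypothesis; the paper's detour through the circulation law is chosen because that integral formulation is reused later for the singular (pure-state) vorticities of Sec.~IV, where pointwise invariance of a distributional $\mathcal{M}$ is less convenient to invoke. One small point you gloss over, which the paper makes explicit, is that $V(s)$ is temporally thin so that $\partial V(s)\subset\tilde{\Xi}(s)$ is purely spatial; this is what lets you conclude from the $s$-plane containment that the integrand actually vanishes on all of $\partial V(s)$.
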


\begin{proof}
By the definition of $V(s)$, we obtain
\begin{eqnarray}
\frac{\rmd}{\rmd s} \mathfrak{C} &=& \frac{\rmd}{\rmd s} \int_{V(s)} \mathcal{P}\wedge \rmd \mathcal{P}
\nonumber 
\\
&=& \int_{V(s)} L_U (\mathcal{P}\wedge \rmd \mathcal{P}).
\label{4Dhelicity_conservation-1}
\end{eqnarray}
Operating by $\rmd$ on both sides of the equation of motion (\ref{EQM-Lie-form}), which reads
\[
i_U \rmd \mathcal{P} + \rmd i_U \mathcal{P}
= c^{-1}\rmd (h+q\varrho-\theta),
\]
we find $\rmd i_U \rmd \mathcal{P} = 0$.
Using this, we obtain the vorticity equation
\begin{equation}
L_U \rmd \mathcal{P} 
= (\rmd i_U +  i_U \rmd) \rmd \mathcal{P} 
= 0.
\label{vorticity_equation}
\end{equation}
The integrand of (\ref{4Dhelicity_conservation-1}) reads
\begin{eqnarray}
L_U (\mathcal{P}\wedge \rmd \mathcal{P}) 
&=& (L_U \mathcal{P})\wedge \rmd \mathcal{P} 
\nonumber \\
&=& c^{-1}\rmd (h+q\varrho-\theta)\wedge \rmd \mathcal{P}
\nonumber \\
&=& c^{-1}\rmd \left[ (h+q\varrho-\theta) \rmd \mathcal{P} \right].
\label{4Dhelicity_conservation_Lie-derivative}
\end{eqnarray}
We thus have
\begin{equation}
\frac{\rmd}{\rmd s} \mathfrak{C} 
= c^{-1}\int_{V(s)} \rmd [(h+q\varrho-\theta) \rmd \mathcal{P}] 
= c^{-1}\int_{\partial V(s)} (h+q\varrho -\theta) \rmd \mathcal{P}.
\label{4Dhelicity_conservation}
\end{equation}
If $\rmd \mathcal{P}=0$ on the boundary $\partial V(s)$, 
$\rmd\mathfrak{C}/\rmd s =0$.

Now we study the relation between $\mathrm{supp}\,\rmd\mathcal{P}$ and $V(s)$.
By the assumption, $\mathrm{supp}\,\rmd\mathcal{P}\cap\tilde{\Xi}(0) \subsetneq V(0)$.
The following Lemma\,\ref{lemma:support}, then, shows 
\begin{equation}
\mathrm{supp}\,\rmd\mathcal{P}\cap\tilde{\Xi}(s) \subsetneq V(s)
\quad (\forall s).
\label{support_of_M}
\end{equation}
We thus find that $\mathrm{supp}\,\rmd\mathcal{P}$ is
contained in a column that has compact $s$-cross-sections.
Since $V(s)$ is temporally thin (see Sec.\,\ref{subsec:diffeo}), 
$\partial V(s)$ is {purely spacial},
i.e. for every $\sigma\subset \partial V(s)$, $\int_\sigma \rmd x^0\wedge\rmd x^j=0$
($\forall j$).
On each $s$-plane $\tilde{\Xi}(s)$, $\rmd\mathcal{P} = 0$
on $\partial V(s)$.
\qed
\end{proof}

To complete the proof of Theorem\,\ref{theorem:helicity_conservation},
we have yet to prove Lemma\,\ref{lemma:support}.
We prepare the following Lemma:

\begin{lemma}[circulation law]
\label{lemma:circulation_law}
Let $\Gamma(s)=\mathcal{T}_U(s) \Gamma_0$ be
an arbitrary co-moving loop (1-chain bounding a disk) of class $C^1$.
In a tubular neighborhood of $\Gamma(s)$, we assume that 
a 1-form $\mathcal{P}$ is continuously differentiable and satisfies 
the equation of motion (\ref{EQM-Lie-form}).
Then, the \emph{circulation}
\begin{equation}
\Phi(s) = \oint_{\Gamma(s)} {\mathcal{P}}
\label{R-circulation}
\end{equation}
is conserved, i.e. $\rmd \Phi(s)/\rmd s =0$.
\end{lemma}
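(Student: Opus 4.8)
The plan is to recast Kelvin's classical circulation argument in the language of differential forms, exploiting the \emph{co-moving} property of $\Gamma(s)$ to trade the $s$-derivative of the integral for the integral of a Lie derivative, and then to kill the resulting loop integral by the exactness on the right-hand side of (\ref{EQM-Lie-form}).

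First I would pull the integral back to the fixed initial loop $\Gamma_0$. Since $\Gamma(s)=\mathcal{T}_U(s)\Gamma_0$, the change of variables $x=\mathcal{T}_U(s)x_0$ gives
\[
\Phi(s) = \oint_{\Gamma(s)}\mathcal{P} = \oint_{\Gamma_0}\mathcal{T}_U(s)^*\mathcal{P}.
\]
Because $\{\mathcal{T}_U(s);\,s\in\mathbb{R}\}$ is a one-parameter group of $C^2$-diffeomorphisms and $\mathcal{P}$ is $C^1$ in a tubular neighborhood of $\Gamma(s)$, the pulled-back integrand $\mathcal{T}_U(s)^*\mathcal{P}$ is jointly continuous and has a continuous $s$-derivative over the fixed compact loop $\Gamma_0$, so I may differentiate under the integral sign. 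The defining flow identity for the Lie derivative, $\frac{\rmd}{\rmd s}\mathcal{T}_U(s)^*\mathcal{P}=\mathcal{T}_U(s)^*L_U\mathcal{P}$, then yields
\[
\frac{\rmd}{\rmd s}\Phi(s) = \oint_{\Gamma_0}\mathcal{T}_U(s)^* L_U\mathcal{P} = \oint_{\Gamma(s)} L_U\mathcal{P}.
\]

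Next I would substitute the equation of motion in its Lie-derivative form (\ref{EQM-Lie-form}), $L_U\mathcal{P}=c^{-1}\rmd(h+q\varrho-\theta)$, to obtain
\[
\frac{\rmd}{\rmd s}\Phi(s) = c^{-1}\oint_{\Gamma(s)} \rmd(h+q\varrho-\theta).
\]
Since $\Gamma(s)$ bounds a disk it is a $1$-cycle, $\partial\Gamma(s)=\emptyset$, and $h+q\varrho-\theta$ is a genuine single-valued function on $M$; hence the right-hand side is the integral of an exact $1$-form over a closed loop, which vanishes by Stokes' theorem, $\oint_{\Gamma(s)}\rmd(\cdots)=\int_{\partial\Gamma(s)}(\cdots)=0$. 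Therefore $\rmd\Phi(s)/\rmd s=0$, as claimed.

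I expect the main (indeed the only) delicate point to be the first step --- the justification that the $s$-derivative of the circulation over the moving loop equals the circulation of $L_U\mathcal{P}$. This is precisely the transport theorem for forms over co-moving chains, and it rests on (i) the legitimacy of interchanging differentiation and integration, guaranteed by the compactness of $\Gamma_0$ together with the stated $C^2$/$C^1$ regularity, and (ii) the pullback identity for the Lie derivative. Once this is in place, the collapse of the loop integral hinges entirely on the \emph{exactness} of the free-energy term; it is exactly the barotropic assumption --- which replaces the heat contribution $T\,\rmd S$ by the exact differential $\rmd\theta$ and thereby keeps the whole right-hand side of (\ref{EQM-Lie-form}) exact --- that makes the vanishing possible.
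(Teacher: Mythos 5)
Your proof is correct and follows exactly the paper's own argument: the paper likewise writes $\frac{\rmd}{\rmd s}\oint_{\Gamma(s)}\mathcal{P}=\oint_{\Gamma(s)}L_U\mathcal{P}=c^{-1}\oint_{\Gamma(s)}\rmd(h+q\varrho-\theta)=0$, calling the derivation ``straightforward.'' Your additional care with the pullback to $\Gamma_0$ and the interchange of differentiation and integration merely fills in details the paper leaves implicit.
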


\begin{proof}
In Sec.\,\ref{sec:topology},
we will use this Lemma for singular solutions of (\ref{EQM-Lie-form}), so
we assume that $\mathcal{P}$ is a classical solution of (\ref{EQM-Lie-form})
only in a tubular neighborhood the loop $\Gamma(s)$
where we need to evaluate the circulation.
For the existence of a tubular neighborhood of a $C^1$-submanifold, see Hirsch\,\cite{link}.
The derivation is straightforward: 
\begin{equation}
\frac{\rmd}{\rmd s}\oint_{\Gamma(s)} {\mathcal{P}}
= \oint_{\Gamma(s)} L_U {\mathcal{P}} 
= c^{-1}\oint_{\Gamma(s)} \rmd (h + q\varrho -\theta)
=0.
\label{circulation}
\end{equation}
\qed
\end{proof}

It is often convenient to rewrite the circulation (\ref{R-circulation}) in terms of the
\emph{vorticity} $\rmd\mathcal{P}$ and a co-moving surface $\sigma(s)$
such that $\partial\sigma(s)=\Gamma(s)$;
by Stokes' formula, 
\begin{equation}
\Phi(s) = \int_{\partial\sigma(s)} \mathcal{P} = \int_{\sigma(s)}\rmd\mathcal{P} .
\label{R-circulation-2}
\end{equation}


\begin{lemma}[vortex motion]
\label{lemma:support}
Suppose that $\mathcal{P}~(\in C^1(M))$ is a
solution of the equation of motion (\ref{EQM-Lie-form}).
Let $\mathcal{M}=\rmd\mathcal{P}$, and $S_0=\mathrm{supp}\,\mathcal{M}\cap\Xi(0)$.
Then, 
\begin{equation}
\mathrm{supp}\,\mathcal{M}\cap\tilde{\Xi}(s) =\mathcal{T}_U(s) S_0
\quad (\forall s).
\label{support_of_M-2}
\end{equation}
\end{lemma}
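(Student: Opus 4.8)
The plan is to read (\ref{support_of_M-2}) as the statement that the vorticity 2-form is \emph{frozen into} the flow, and to derive it from the vorticity equation $L_U\mathcal{M}=0$ already established in (\ref{VD-Lie-form}). The whole argument is formal once the flow-invariance of $\mathcal{M}$ is in hand; no estimate or limiting procedure is required.

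First I would convert the infinitesimal statement $L_U\mathcal{M}=0$ into the finite, group-level statement $\mathcal{T}_U(s)^*\mathcal{M}=\mathcal{M}$. For this I invoke the standard identity relating the Lie derivative to the pullback along the generated flow,
\[
\frac{\rmd}{\rmd s}\,\mathcal{T}_U(s)^*\mathcal{M} = \mathcal{T}_U(s)^*\big(L_U\mathcal{M}\big),
\]
whose right-hand side vanishes identically by (\ref{VD-Lie-form}). Hence $s\mapsto\mathcal{T}_U(s)^*\mathcal{M}$ is constant in $s$, and since $\mathcal{T}_U(0)=\mathrm{id}$ we obtain $\mathcal{T}_U(s)^*\mathcal{M}=\mathcal{M}$ for every $s$. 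The assumed $C^2$-regularity of the one-parameter group $\{\mathcal{T}_U(s)\}$ together with the $C^1$-regularity of $\mathcal{P}$ (so that $\mathcal{M}=\rmd\mathcal{P}$ is continuous) is precisely what makes this differentiation legitimate.

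Next I would promote the invariance of the form to the invariance of its support. Because $\mathcal{T}_U(s)$ is a diffeomorphism, its differential is a pointwise linear isomorphism, so $(\mathcal{T}_U(s)^*\mathcal{M})_x=0$ if and only if $\mathcal{M}_{\mathcal{T}_U(s)x}=0$. Combined with $\mathcal{T}_U(s)^*\mathcal{M}=\mathcal{M}$, this shows that the set $\{x:\mathcal{M}_x\neq0\}$ is carried onto itself by $\mathcal{T}_U(s)$; taking closures and using that $\mathcal{T}_U(s)$ is a homeomorphism then gives $\mathcal{T}_U(s)\,(\mathrm{supp}\,\mathcal{M})=\mathrm{supp}\,\mathcal{M}$ for every $s$.

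Finally I would conclude with a one-line set manipulation. Since $\mathcal{T}_U(s)$ is a bijection it commutes with intersection, so, using the definition (\ref{s-plane}) of $\tilde{\Xi}(s)$ and $\mathcal{T}_U(0)=\mathrm{id}$,
\[
\mathcal{T}_U(s)S_0 = \mathcal{T}_U(s)\big(\mathrm{supp}\,\mathcal{M}\cap\Xi(0)\big) = \big(\mathcal{T}_U(s)\,\mathrm{supp}\,\mathcal{M}\big)\cap\tilde{\Xi}(s) = \mathrm{supp}\,\mathcal{M}\cap\tilde{\Xi}(s),
\]
which is exactly (\ref{support_of_M-2}). I do not expect a serious obstacle here: the entire content lies in the frozen-in identity $\mathcal{T}_U(s)^*\mathcal{M}=\mathcal{M}$, and the only point deserving care is confirming that the stated $C^1/C^2$ regularity suffices both for the Lie-derivative computation and for $\mathrm{supp}\,\mathcal{M}$ to transform correctly as a set under the diffeomorphism $\mathcal{T}_U(s)$.
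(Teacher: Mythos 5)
Your argument is correct, but it takes a genuinely different route from the paper. The paper proves this lemma by testing $\mathcal{M}$ against co-moving $2$-surfaces: by Lemma~\ref{lemma:circulation_law} and (\ref{R-circulation-2}) the circulation $\int_{\sigma(s)}\rmd\mathcal{P}$ is conserved on every co-moving surface, so a surface that misses $S_0$ initially carries zero circulation forever while one that meets $S_0$ keeps its finite circulation, whence the support must co-move. You instead integrate the infinitesimal frozen-in condition $L_U\mathcal{M}=0$ of (\ref{VD-Lie-form}) to the group-level identity $\mathcal{T}_U(s)^*\mathcal{M}=\mathcal{M}$ and read off the invariance of $\mathrm{supp}\,\mathcal{M}$ pointwise from the fact that the pullback by a diffeomorphism is a fibrewise isomorphism. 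Your version is arguably tighter: the paper's step ``if $\sigma(0)$ and $S_0$ intersect to yield a finite circulation, it conserves'' is a little loose, since a surface can intersect the support and still have vanishing circulation by cancellation, so concluding that $\mathcal{M}\neq 0$ exactly on the transported set really requires localizing over all small co-moving surfaces; your pointwise argument avoids this entirely. The one thing the paper's route buys is regularity: the lemma assumes only $\mathcal{P}\in C^1(M)$, and the circulation argument needs only $L_U\mathcal{P}$ (first derivatives), whereas your differentiation of $\mathcal{T}_U(s)^*\mathcal{M}$ in $s$ requires $\mathcal{M}=\rmd\mathcal{P}$ to be differentiable along the flow --- mere continuity of $\mathcal{M}$, contrary to your closing remark, does not legitimize that step. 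In the context where the lemma is actually invoked (Theorem~\ref{theorem:helicity_conservation}, where $\mathcal{P}\in C^2(M)$) this is harmless, but under the lemma's stated $C^1$ hypothesis you would need either a mollification argument or the paper's integral formulation.
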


\begin{proof}
By Lemma\,\ref{lemma:circulation_law} and (\ref{R-circulation-2}), 
any co-moving surface $\sigma(s)$ that
does not intersect with $S_0$ at $s=0$ yields zero circulation for every $s\in\mathbb{R}$.
On the other hand, if $\sigma(0)$ and $S_0$ intersect to yield a finite circulation,
it conserves through the motion of $\sigma(s)$.
Therefore, $\mathcal{M}\neq0$ only on $\sigma(s)$ that starts from $\sigma(0)$ containing a
point on $S_0$,
implying (\ref{support_of_M-2}). 
\qed
\end{proof}

\begin{remark}[Kelvin's circulation theorem and connection theorem]
\label{remark:Kelvin}
Lemma\,\ref{lemma:circulation_law} is the special-relativistic correction of
{Kelvin's circulation theorem}
(see \cite{Synge} for a general-relativistic treatment).
The point is that the circulation must be evaluated on a co-moving loop ${\Gamma(s)}$,
on which the {synchroneity} (simultaneity) with respect to a reference-frame time $t$ is broken;
if we evaluate the circulation on a synchronic loop $\Gamma(t)$, 
the corresponding \emph{semi-relativistic} circulation is not invariant\,\cite{MahajanYoshida2010}.
In Lemma\,\ref{lemma:support}, we used the circulation law to show that the vorticity co-moves with the fluid.
The same argument applies to show that every vortex line is frozen in the fluid element.
The motion of a vortex line  (or a magnetic field line in an MHD system)
can be described explicitly by the \emph{connection equation} that governs the separation vector connecting infinitesimally close fluid elements\,\cite{Newcomb}.
Pegoraro\,\cite{Pegoraro} formulated the Lorentz-covariant connection equation for a relativistic MHD system.
Interestingly, the \emph{magnetic flux} on a synchronic surface $\sigma(t)\subset\Xi(t)$
is invariant in the MHD system, because the right-hand side of (\ref{e:div K}) vanishes in the MHD model (see also Sec.\,\ref{sec:discussion}).
In parallel to this fact, the connection equation of MHD may be written in a seemingly 
non-covariant form.
Pegoraro's formulation considers a surface orbit in space-time to overcome the lack of simultaneity on the co-moving magnetic field line.
In the next section, we will consider a ``singular'' vortex surface in space-time, which, however, is
different from the virtual surface of the separation vectors;
it is physically the orbit in space-time of a vortex filament carrying a unit vorticity,
and is mathematically the pure state of Banach algebra.
\end{remark}

\section{Linking in the Minkowski space-time}
\label{sec:topology}

\subsection{Topological constraint by the helicity}
Link is definable (as a homotopy invariant) for a pair of geometric objects (chains) having codimension
less than or equal to one\,\cite{link}; 
for example, two loops (1-chains bounding disks) may link in three-dimensional space, while they do not in four-dimensional space.
Therefore, it seems that the helicity ceases to be related to linking numbers in the four-dimensional relativistic space-time.
However, the relativistic helicity conservation, derived in Sec.\,\ref{subsec:R-helicity},
does impose a topological constraint.
The aim of this section is to elucidate the topological meaning of
the relativistic helicity conservation by generalizing the conventional relation between the
helicity and the link of vortex filaments (see Sec.\,\ref{sec:introduction}) in the three-dimensional space to the Minkowski space-time.
Since $\mathcal{M} = \rmd\mathcal{P}$ is a 2-form, we may consider a \emph{current}
(a differential form with hyper-function coefficients) supported on a
two-dimensional surface.  
A pair of two-dimensional surfaces can link in the four-dimensional space-time.

The reason why the link of surfaces yields a topological constraint
on loops (vortex-filaments) is because such surfaces can be chosen as \emph{orbits} of loops.
The conventional vortex-filaments are, then, the temporal cross-sections of the surfaces.
When we consider a dynamical process in space-time, represented by
a diffeomorphism $\mathcal{T}_U(s)$, 
geometrical manipulations caused by $\mathcal{T}_U(s)$ is constrained by the
causality (on any reference frame, $t$ cannot go to negative).
For instance, consider a one-dimensional space $\mathbb{R}$.  
Two points on the two-dimensional space-time ($x$-$t$ plane) do not link, 
and the spacial projections of them, $x$ and $x'$,
may exchange their positions on the space-axis without intersecting their orbits in the space-time.
However, such exchange is only possible if one particle goes to the direction of negative $t$.
Otherwise, $x$ and $x'$ cannot change their order without causing an intersection of
their orbits in the $x$-$t$ plane.

Similarly, a topological constraint on the link of surfaces in the four-dimensional space-time
causes a constraint on the link of their temporal cross-sections, loops,
if the surfaces are the orbits of the loops.

\subsection{Pure-state vorticity and filaments}
\label{subsec:pure-state}

To explore the basic topological constraint due to the relativistic helicity conservation, we consider
the link of vortex filaments or surfaces that are formally the $\delta$-measures,
supported on loops or surfaces, with a \emph{unit magnitude}. 
To formulate them on a rigorous mathematical footing, 
we invoke the the notion of \emph{pure states}.

Let us start with a well-known example of commutative $C^*$ ring.
A ``point'' on a compact space $X$ is equivalent to a pure state of the Banach algebra $C^0(X,\mathbb{C})$,
which is a linear form $\eta$ such that $\eta(ff^*)\geq 0$ and $\eta(1)=1$
(it is ``pure'' if $\eta$ is not a mixture $\eta=(\eta_1+\eta_2)/2$ of two distinct states).
By Gelfand's theorem, every point $\bm{\xi}\in X$ can be
represented by a pure state $\eta_{\bm{\xi}}$ such that $\eta_{\bm{\xi}}(f)=f(\bm{\xi})$
for every $f\in C^0(X,\mathbb{C})$.
Algebraically, $\eta_{\bm{\xi}}$ is the quotient of $C^0(X,\mathbb{C})$ by 
the maximum ideal $\mathcal{I}_{\bm{\xi}}$ generated by $\|\bm{x}-\bm{\xi}\|$.
We may represent a pure state by a $\delta$-function, i.e.,
$\eta_{\bm{\xi}}(f)=\int_X f(\bm{x}) \delta(\bm{x}-\bm{\xi})\rmd^n x$.
Here, we generalize ``points'' to $p$-chains.

\begin{definition}[pure sate]
\label{definition:pure_state}
Let $M$ be a smooth manifold of dimension $n$ (in the present application, $M$ is the four-dimensional Minkowski space-time),
and $\Omega\subset M$ be a $p$-dimensional connected null-boundary submanifold of class $C^1$.
Each $\Omega$ can be regarded as an equivalent of a 
\emph{pure-sate functional} $\eta_\Omega$ on the space $\wedge^p T^*M$ of continuous $p$-forms:
\[
\eta_\Omega:\, \omega 
\mapsto \int_\Omega \omega,
\]
which can be represented as
\[
\eta_\Omega(\omega) 
= \int_M \mathfrak{J}(\Omega)\wedge\omega
= \int_\Omega \omega
\]
with an $(n-p)$-dimensional $\delta$-measure 
$\mathfrak{J}(\Omega)=\wedge^{n-p}\delta(x^\mu-\xi^\mu)\rmd x^\mu $,
where $x^\mu$ are local coordinates, and 
\[
\mathrm{supp}\,\mathfrak{J}(\Omega) = \Omega = \{ \bm{x}\in\mathbb{R}^n;\, x^\mu=\xi^\mu \,(\mu=1,\cdots,n-p) \} .
\]
We call $\mathfrak{J}(\Omega)$ a \emph{pure state} $(n-p)$-form,
which is a member of the Hodge-dual space of $\wedge^p T^*M$.
\end{definition}

On compact manifolds, the pure state of a submanifold $\Omega$ 
is obtained as a limit of the Thom form of $\Omega$,
and the cohomology class of a pure state of $\Omega$ 
corresponds to the Poincar\'e dual of $\Omega$\,\cite{Poincare-dual}.
We also note that the duality between submanifolds and pure state parallels the duality 
between infinite chains in homologies with closed support (or Borel-Moore homology)
and cochains in cohomologies\,\cite{cochain}.

\begin{remark}[pure states in quantum theory]
\label{remark:pure-state}
The notion of pure state is commonly used in quantum theory, where the 
observables are self-adjoint operators,
and their duals are wave functions, members of a Hilbert space $V$.
The pure states, which are the ``points'' on the unit sphere of $V$, 
constitute the spectral resolution of the \emph{identity} on $V$,
The aforementioned $C^*$ ring of scalar functions (classical mechanical observables) 
is \emph{quantized} by replacing the ring by a commutative ring of unitary operators;
then, the classical state, a point on coordinates, is replaced by a wave function, a point in the function space $V$.
In Definition\,\ref{definition:pure_state}, observables are differential forms, 
and their duals are the Hodge-dual $(n-p)$-forms.
The notion of a point is, then, generalized to a $p$-chain.
\end{remark}

\begin{figure}[bt]
  \centering
  \includegraphics[width=0.5\textwidth]{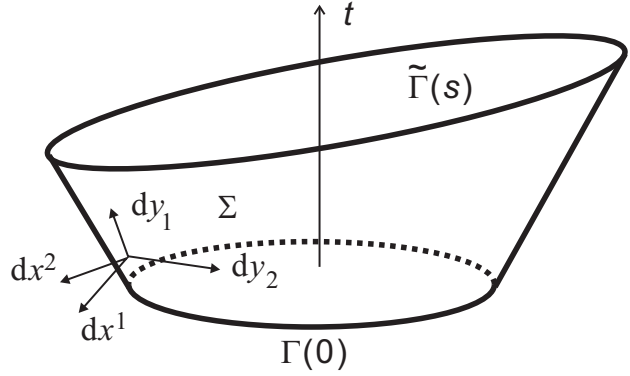}
  \caption{\label{fig:pure_state}
Pure-state on 2-forms in four-dimensional space-time, which is represented by a
2-chain $\Sigma$.  $\rmd y_1\wedge\rmd y_2$ is the tangential surface element on $\Sigma$
(i.e. $y_1$ and $y_2$ are the local coordinates on $\Sigma$),
and $\rmd x^1\wedge\rmd x^2$ is the normal 2-form 
(i.e. $*\rmd y_1\wedge\rmd y_2=\rmd x^1\wedge\rmd x^2$).
In these local coordinates, $\mathfrak{J}(\Sigma)=\delta_\Sigma \rmd x^1\wedge\rmd x^2$.
The initial-time cross-section of $\Sigma$ is denoted by $\Gamma(0)$.
An $s$-cross-sections of $\Sigma$ is denoted by $\tilde{\Gamma}(s)$.
}
\end{figure}

Let $\Sigma$ be a 2-dimensional connected null-boundary submanifold of class $C^1$ embedded in the four-dimensional space-time $M$ (see Fig.\,\ref{fig:pure_state}).
A pure-state functional $\eta_\Sigma(\omega) = \int_\Sigma \omega$ is 
represented by a pure-state 2-form $\mathfrak{J}(\Sigma)$ such that
\begin{equation}
\mathfrak{J}(\Sigma)=\delta_\Sigma * \rmd y_1\wedge \rmd y_2
=\delta_\Sigma  \mathfrak{m},
\quad \mathfrak{m}= \frac{1}{2} \mathfrak{m}_{\mu\nu}\rmd x^\mu\wedge\rmd x^\nu,
\label{2-form_current}
\end{equation}
where $\delta_\Sigma$ is the two-dimensional $\delta$-function supported on $\Sigma$,
$\rmd y_1\wedge \rmd y_2$ is a local two-dimensional surface element on $\Sigma$
($\rmd y_1$ and $\rmd y_2$ are tangent to $\Sigma$),
and $\mathfrak{m}=*\rmd y_1\wedge \rmd y_2$ is decomposed in terms of $\rmd x^\mu\wedge\rmd x^\nu$;
$\mathfrak{m}_{\mu\nu}$ is a certain antisymmetric tensor characterizing the surface $\Sigma$,
which we denote
\begin{equation}
\mathfrak{m}_{\mu\nu} =
\left( \begin{array}{cccc}
 0              &  \mathfrak{e}_1 &  \mathfrak{e}_2 &  \mathfrak{e}_3 \\
-\mathfrak{e}_1 &  0              & -\mathfrak{b}_3 &  \mathfrak{b}_2 \\
-\mathfrak{e}_2 &  \mathfrak{b}_3 &  0              & -\mathfrak{b}_1 \\
-\mathfrak{e}_3 & -\mathfrak{b}_2 &  \mathfrak{b}_1 &   0             \\
\end{array} \right) ,
\label{m-components_down}
\end{equation}
In Sec.\,\ref{subsec:orbit_of_loop}, we will identify $\Sigma$ as the
orbit of vortex filaments, and then, $\mathfrak{m}_{\mu\nu}$ of (\ref{m-components_down}) will be 
compared with the vorticity 2-form $\mathcal{M}_{\mu\nu}$ of (\ref{M-components_down}).

The $t$-cross-section of $\Sigma$ is
\begin{equation}
\Gamma(t) =  \Xi(t)\cap\Sigma ,
\label{t-filament}
\end{equation}
which is assumed to be a loop (1-cycle bounding a disk) in the space $\Xi(t)$.
Denoting by $\delta_{\Gamma(t)}$ the $\delta$-function supported on $\Gamma(t)$
(i.e. $\delta_{\Gamma(t)}=\delta_{\Xi(t)}\delta_\Sigma=\delta_{\Xi(t)\cap\Sigma}$), 
we define a ``$\bm{B}$-filament'' on a $t$-plane by
\begin{equation}
\mathfrak{j}_b(\Gamma(t)) = -\delta_{\Gamma(t)} * \mathfrak{b}^\dagger , 
\label{B-filament}
\end{equation}
where $\mathfrak{b}^\dagger =  \mathfrak{b}^k \rmd x_k$ with
$\mathfrak{b}^k=(1/2)\epsilon^{ijk}\mathfrak{m}_{ij}=-\mathfrak{b}_k$,
i.e. $\mathfrak{b}^\dagger$ is the Minkowski-dual of the magnetic-field-like 3-vector 
$\mathfrak{b}= \mathfrak{b}_k \rmd x^k$; 
notice the flip of the sign by the representation the 1-form $\mathfrak{b}^\dagger$ on the basis $\rmd x_k$.  
The projector $\rho_b(t) :\,\mathfrak{J}(\Sigma)\mapsto \mathfrak{j}_b(\Gamma(t))$ is written as
\begin{equation}
\rho_b(t) \mathfrak{J}(\Sigma)  =  - \delta_{\Xi(t)} \rmd x^0\wedge \mathfrak{J}(\Sigma) .
\label{t-projector_to_B}
\end{equation}
Notice that $\mathfrak{J}(\Sigma)$ has other three components including $\rmd x^0$ on a $t$-plane,
which we call an ``$\bm{E}$-filament'':
\begin{equation}
\mathfrak{j}_e(\Gamma(t)) = {-}\delta_{\Gamma(t)} * \mathfrak{e},
\label{E-filament}
\end{equation}
where $\mathfrak{e}= \mathfrak{e}_k \rmd x^k$ with
$\mathfrak{e}_k=(1/2)\epsilon_{ijk} *\mathfrak{m}_{ij}$.
The projector $\rho_e (t):\,\mathfrak{J}(\Sigma)\mapsto \mathfrak{j}_e(\Gamma(t))$ is
\begin{equation}
\rho_e(t) \mathfrak{J}(\Sigma)  =  \delta_{\Xi(t)} \rmd x_0\wedge * \mathfrak{J}(\Sigma).
\label{t-projector_to_E}
\end{equation}

In the relativistic formulation, the proper-time $s$-cross-section of $\Sigma$ is
more important: 
\begin{equation}
\tilde{\Gamma}(s) = \tilde{\Xi}(s)\cap\Sigma.
\label{s-filament}
\end{equation}
Generalizing (\ref{t-projector_to_B}), 
the ``relativistic $\bm{B}$-filament'' is a singular 3-form such that
\begin{equation}
{\mathfrak{j}}_b(\tilde{\Gamma}(s)) = \tilde{\rho}_b(s) \mathfrak{J}(\Sigma) 
= - \delta_{\tilde{\Xi}(s)} \mathcal{U}\wedge \mathfrak{J}(\Sigma) ,
\label{s-projector_to_B}
\end{equation}
where $\mathcal{U} = U_\mu \rmd x^\mu$.
Similarly, the relativistic $\bm{E}$-filament is defined as
(denoting $\mathcal{U}^\dagger = U^\mu \rmd x_\mu$)
\begin{equation}
{\mathfrak{j}}_e (\tilde{\Gamma}(s)) = \tilde{\rho}_e(s) \mathfrak{J}(\Sigma) 
= \delta_{\tilde{\Xi}(s)}  \mathcal{U}^\dagger\wedge * \mathfrak{J}(\Sigma).
\label{s-projector_to_E}
\end{equation}
Or, we may write 
\begin{equation}
\tilde{\rho}_e(s) \mathfrak{J}(\Sigma) 
=  {-}\delta_{\tilde{\Xi}(s)} * i_U \mathfrak{J}(\Sigma).
\label{s-projector_to_E'}
\end{equation}

Explicitly, we may write 
\begin{eqnarray*}
{\mathfrak{j}}_b (\tilde{\Gamma}(s)) &=& \delta_{\tilde{\Gamma}(s)} \tilde{\mathfrak{b}},
\\
{\mathfrak{j}}_e(\tilde{\Gamma}(s)) &=& {-}\delta_{\tilde{\Gamma}(s)} * \tilde{\mathfrak{e}},
\end{eqnarray*}
with a 3-form $\tilde{\mathfrak{b}}$ and a 1-form $\tilde{\mathfrak{e}}$ given by
(denoting $\bm{\mathfrak{b}}=\mathfrak{b}_k\rmd x^k$
and $\bm{\mathfrak{e}}=\mathfrak{e}_k\rmd x^k$)
\begin{eqnarray}
\begin{array}{lllll}
\tilde{\mathfrak{b}} &=& \tilde{\mathfrak{b}}_\mu \rmd x^{\nu_1}\wedge\rmd x^{\nu_2}\wedge\rmd x^{\nu_3}
&=&
\gamma(({-}\bm{v}/c)\cdot\bm{\mathfrak{b}},\, \bm{\mathfrak{b}} - (\bm{v}/c)\times\bm{\mathfrak{e}} ),
\label{Lorentz-transformation-B}
\\
\tilde{\mathfrak{e}} &=& \tilde{\mathfrak{e}}_\mu \rmd x^{\mu}
&=&
\gamma(-(\bm{v}/c)\cdot\bm{\mathfrak{e}},\, \bm{\mathfrak{e}} + (\bm{v}/c)\times\bm{\mathfrak{b}} ) .
\label{Lorentz-transformation-E}
\end{array}
\end{eqnarray}
We find that the 3-vector parts of $\tilde{\mathfrak{b}}$ and $\tilde{\mathfrak{e}}$ are, respectively, the
Lorentz transformations of the EM-like fields $\bm{\mathfrak{b}}$ and $\bm{\mathfrak{e}}$;
compare (\ref{M-components_down}) and (\ref{m-components_down}).

In general, neither ${\mathfrak{j}}_b(\tilde{\Gamma}(s))$ nor 
$\mathfrak{j}_b({\Gamma}(t))$
is a pure-state 3-form
(we use the lower-case $\mathfrak{j}$ to denote a non-pure-state singular filament).
Iff ${\mathfrak{j}}_e(\tilde{\Gamma}(s))=0$, then ${\mathfrak{j}}_b(\tilde{\Gamma}(s))$ is a pure state,
and it is the case when $\mathfrak{J}(\Sigma)$ 
belongs to the orthogonal complement of $\textrm{Ker}(\tilde{\rho}_b(s))$.
We will denote by ${\mathfrak{J}}_b(\tilde{\Gamma}(s))$ the pure-state $\bm{B}$-filament. 
Here we prove the following lemma:

\begin{lemma}
\label{lemma:pure-state-1}
Let $\Sigma\subset M$ be a 2-chain of class $C^1$,
whose proper-time cross-section $\tilde{\Gamma}(s)=\tilde{\Xi}(s)\cap\Sigma$ $(\forall s\in\mathbb{R})$ is
a single loop (1-cycle bounding a disk).
Suppose that $\mathfrak{J}(\Sigma)=\delta_\Sigma\mathfrak{m}$ is a pure-state 2-form.
Iff $\tilde{\rho}_e(s) \mathfrak{J}(\Sigma)=0$,
then $\tilde{\rho}_b(s) \mathfrak{J}(\Sigma)$
is a pure-state 3-from.
\end{lemma}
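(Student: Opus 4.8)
The plan is to reduce the statement to a pointwise (fiberwise) claim about the antisymmetric tensor $\mathfrak{m}$ and the $4$-velocity $U$ along $\tilde{\Gamma}(s)$, since every object here is a $\delta$-current supported on $\Sigma$ and the projectors act only on the coefficient $\mathfrak{m}$. First I would rewrite the two projectors through interior products: from (\ref{s-projector_to_E'}) one has $\tilde{\rho}_e(s)\mathfrak{J}(\Sigma)=-\delta_{\tilde{\Xi}(s)}*\,i_U\mathfrak{J}(\Sigma)$, so that $\tilde{\rho}_e(s)\mathfrak{J}(\Sigma)=0$ iff $i_U\mathfrak{m}=0$; and in (\ref{s-projector_to_B}), writing $\mathcal{U}\wedge\mathfrak{m}$ through the covariant $3+1$ split relative to $U$, $\mathfrak{m}=\mathcal{U}\wedge(i_U\mathfrak{m})\pm *(\mathcal{U}\wedge i_U{*}\mathfrak{m})$, the term $\mathcal{U}\wedge\mathcal{U}\wedge(i_U\mathfrak{m})$ vanishes, so $\tilde{\rho}_b(s)\mathfrak{J}(\Sigma)$ depends only on the magnetic-like $1$-form $i_U{*}\mathfrak{m}$. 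This already isolates the content: the $\bm{E}$-part controls $\tilde{\rho}_e$, while $\tilde{\rho}_b$ retains only the $\bm{B}$-part.

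Next I would pin down what \emph{pure-state $3$-form} means via Definition \ref{definition:pure_state}: a singular $3$-form $\delta_{\tilde{\Gamma}(s)}\beta$ equals the pure state of the loop exactly when $\beta$ is the unit normal $3$-volume of $\tilde{\Gamma}(s)$, equivalently when its Minkowski-Hodge dual ${*}\beta$ is a $1$-form tangent to $\tilde{\Gamma}(s)$. The geometric hinge is the identity $i_U\mathfrak{m}=i_U{*}\tau=\pm{*}(\tau\wedge\mathcal{U})$, where $\tau=\rmd y_1\wedge\rmd y_2$ is the tangent $2$-form of $\Sigma$ and $\mathfrak{m}={*}\tau$; since the pure-state hypothesis makes $\tau$ decomposable, $\tau\wedge\mathcal{U}=0$ iff $\mathcal{U}$ lies in the cotangent plane of $\Sigma$, i.e. $i_U\mathfrak{m}=0$ iff $U\in T_p\Sigma$.

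With this, the two implications follow from an adapted-frame computation. For $\tilde{\rho}_e=0\Rightarrow$ pure state, $U\in T_p\Sigma$ lets me take $U$ together with a unit spacelike $B\perp U$ as a basis of $T_p\Sigma$, so $\mathfrak{m}={*}(\mathcal{U}\wedge B^\flat)$ and $i_U{*}\mathfrak{m}=\pm i_U(\mathcal{U}\wedge B^\flat)=\pm B^\flat$ (using $i_U\mathcal{U}=1$ and $i_U B^\flat=\langle U,B\rangle=0$); since $B$ spans $T_p\Sigma\cap T_p\tilde{\Xi}(s)$ it is the tangent of $\tilde{\Gamma}(s)$, so $\tilde{\rho}_b(s)\mathfrak{J}(\Sigma)$ is the normal $3$-volume of the loop, a pure state. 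For the converse I would argue by contrapositive: if $U\notin T_p\Sigma$, then in a basis $\{A,B\}$ of $T_p\Sigma$ one has $i_U{*}\mathfrak{m}=\pm[\langle U,A\rangle B^\flat-\langle U,B\rangle A^\flat]$, which acquires a component transverse to the loop tangent $T_{\text{loop}}=T_p\Sigma\cap T_p\tilde{\Xi}(s)$, so ${*}(\mathcal{U}\wedge\mathfrak{m})$ is not tangent to $\tilde{\Gamma}(s)$ and $\tilde{\rho}_b(s)\mathfrak{J}(\Sigma)$ fails to be a pure state. A concrete cross-check is to impose $\tilde{\mathfrak{e}}=0$ in (\ref{Lorentz-transformation-E}), i.e. $\bm{\mathfrak{e}}=-(\bm{v}/c)\times\bm{\mathfrak{b}}$, and verify from (\ref{Lorentz-transformation-B}) that $\tilde{\mathfrak{b}}$ collapses to the (suitably normalized) tangent $4$-vector of the loop.

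The step I expect to be the main obstacle is making the pure-state characterization airtight for these singular currents and, in particular, controlling the pairing $\langle U,T_{\text{loop}}\rangle$: the $1$-form $i_U{*}\mathfrak{m}$ is parallel to $T_{\text{loop}}^\flat$ precisely when $\langle U,T_{\text{loop}}\rangle=0$, so I must show that this orthogonality is equivalent to $U\in T_p\Sigma$ and not an extra assumption. Here I would invoke that the $s$-cross-section is a genuine loop for every $s$, together with the co-moving construction $\tilde{\Gamma}(s)=\mathcal{T}_U(s)\tilde{\Gamma}(0)$ and the temporal thinness of $\tilde{\Xi}(s)$ established in Sec.\,\ref{subsec:diffeo}, which force the loop tangent into the $U$-orthogonal subspace of the slice and thereby close the equivalence.
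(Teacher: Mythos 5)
Your route is genuinely different from the paper's, and it contains a gap that your own last paragraph correctly identifies but then fails to close. The paper proves the ``if'' direction weakly: it tests the pure-state pairing $\int_M\mathfrak{J}(\Sigma)\wedge\omega=\int_\Sigma\omega$ against $\omega=f_m\,\mathcal{U}\wedge a$ with $f_m\rightarrow\delta_{\tilde{\Xi}(s)}$, and the hypothesis $i_U\mathfrak{J}(\Sigma)=0$ enters exactly once, to collapse $\int_\Sigma f_m\,\mathcal{U}\wedge a$ to the loop integral $\int_{\tilde{\Gamma}(s)}a$ while the left-hand side converges to $\int_M[\tilde{\rho}_b(s)\mathfrak{J}(\Sigma)]\wedge a$; no identification of the direction of $\tilde{\mathfrak{b}}$ with the loop tangent is ever made. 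The ``only if'' direction is then obtained not from any directional statement but from a normalization identity: the pure-state condition fixes $\mathfrak{m}\wedge *\mathfrak{m}=|\bm{\mathfrak{e}}|^2-|\bm{\mathfrak{b}}|^2=1$ as in (\ref{normalization-1}), and the Lorentz-invariance relation (\ref{normalization-2}) gives $|\tilde{\mathfrak{b}}|^2=1+|\tilde{\mathfrak{e}}|^2$, so the $\bm{B}$-filament carries the unit weight required of a pure state iff the $\bm{E}$-filament vanishes.

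Your adapted-frame argument instead hinges on identifying the $U$-orthogonal unit vector $B\in T_p\Sigma$ with the tangent of $\tilde{\Gamma}(s)=\tilde{\Xi}(s)\cap\Sigma$, i.e. on $\langle U,T_{\mathrm{loop}}\rangle=0$, and your proposed closure of this point does not work. Temporal thinness of $\tilde{\Xi}(s)$ gives only transversality ($U\notin T_p\tilde{\Xi}(s)$), not Minkowski orthogonality of the slice to $U$: already at $s=0$ the filament lies in $\Xi(0)=\{x^0=0\}$, its tangent is $W^\mu=(0,\bm{w})$, and $U_\mu W^\mu=-\gamma\,\bm{v}\cdot\bm{w}/c\neq 0$ whenever the flow velocity has a component along the filament. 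Consequently $B$ need not be the loop tangent, your forward direction does not land on the pure state of $\tilde{\Gamma}(s)$ as you have characterized it, and your contrapositive tests the wrong condition ($\langle U,T_{\mathrm{loop}}\rangle\neq 0$ rather than $U\notin T_p\Sigma$). The repair is to drop the directional characterization altogether: prove the ``if'' direction at the level of the pairing, where only $i_U\mathfrak{J}(\Sigma)=0$ is needed, and prove the ``only if'' direction by the scalar computation (\ref{normalization-1})--(\ref{normalization-2}), which is insensitive to the direction of the loop tangent. Your closing cross-check ($\tilde{\mathfrak{e}}=0$ forcing $\tilde{\mathfrak{b}}$ to collapse to a properly normalized tangent) is in fact the germ of that normalization argument, and that is where the effort should be concentrated.
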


\begin{proof}
If $\mathfrak{J}(\Sigma)$ is a pure-state 2-form, we may write,
for every continuous 2-form $\omega$,
\begin{equation}
\int_M \mathfrak{J}(\Sigma)\wedge \omega =\int_\Sigma \omega .
\label{pure-state-filament-proof-1}
\end{equation}
Let us put $\omega = f_m \mathcal{U}\wedge a$, where $f_m$ is a scalar and $a$ is a 1-form.
Consider a sequence $f_m \rightarrow \delta_{\tilde{\Xi}(s)}$ ($m\rightarrow\infty$).
If $\mathcal{U}$ parallels $\Sigma$, i.e.
$i_U \mathfrak{J}(\Sigma) =0$
(implying that $\tilde{\rho}_e(s) \mathfrak{J}(\Sigma)=0$),
the right-hand side of (\ref{pure-state-filament-proof-1}) yields
\[
\int_\Sigma  f_m \mathcal{U}\wedge a \rightarrow 
\int_{\tilde{\Xi}(s)\cap\Sigma} a = \int_{\tilde{\Gamma}(s)} a.
\]
On the other hand, the left-hand side of (\ref{pure-state-filament-proof-1}) reads
\[
\int_M f_m \mathfrak{J}(\Sigma)\wedge \mathcal{U} \wedge a   
\rightarrow 
- \int_M \delta_{\tilde{\Xi}(s)}  \mathcal{U} \wedge \mathfrak{J}(\Sigma) \wedge a
= \int_M  [\tilde{\rho}_b(s) \mathfrak{J}(\Sigma)] \wedge a.
\]
Denoting $\mathfrak{J}_b(\tilde{\Gamma}(s))=\tilde{\rho}_b(s) \mathfrak{J}(\Sigma)$,
we may write, for every continuous 1-form $a$,
\[
\int_M \mathfrak{J}_b(\tilde{\Gamma}(s))\wedge a = \int_{\tilde{\Gamma}(s)} a,
\]
proving that $\mathfrak{J}_b(\tilde{\Gamma}(s))$ is a pure-state 3-form.

Next, we examine the normalization of the pure-states.  
Let us invoke the representation
$\mathfrak{J}(\Sigma)=\delta_\Sigma\mathfrak{m}_{\mu\nu} \rmd x^\mu\wedge \rmd x^\nu/2$ in terms of
the tensor (\ref{m-components_down}).
Then, we find
\begin{equation}
\mathfrak{m}\wedge * \mathfrak{m} = \sum_{k=1}^3 \left( \mathfrak{e}_k^2 - \mathfrak{b}_k^2 \right)
=|\bm{\mathfrak{e}}|^2 - |\bm{\mathfrak{b}}|^2 ,
\label{normalization-1}
\end{equation}
which must be unity for $\mathfrak{J}(\Sigma)$ to be a pure-state.
By (\ref{Lorentz-transformation-B}) and (\ref{Lorentz-transformation-E}), we obtain
\begin{eqnarray}
|\tilde{\mathfrak{b}}|^2 &=& \tilde{\mathfrak{b}}_\mu \tilde{\mathfrak{b}}^\mu
\nonumber \\
&=& \gamma^2 \left[ c^{-2}(\bm{v}\cdot\bm{{\mathfrak{b}}})^2-|\bm{\mathfrak{b}}-(\bm{v}/c)\times\bm{\mathfrak{e}}|^2 \right]
\nonumber \\
&=& |\bm{\mathfrak{e}}|^2 - |\bm{\mathfrak{b}}|^2 
+ |\tilde{\mathfrak{e}}|^2,
\label{normalization-2}
\end{eqnarray} 
where $|\tilde{\mathfrak{e}}|^2= \tilde{\mathfrak{e}}_\mu \tilde{\mathfrak{e}}^\mu$
(by writing (\ref{normalization-2}) as
$|\tilde{\mathfrak{b}}|^2 - |\tilde{\mathfrak{e}}|^2 = |\bm{\mathfrak{e}}|^2 - |\bm{\mathfrak{b}}|^2$,
which implies the Lorentz invariance of the Lagrangian, 
we notice that the sign changes because of the Minkowski metric).
By (\ref{normalization-1}), we have shown $|\bm{\mathfrak{e}}|^2 - |\bm{\mathfrak{b}}|^2 =1$.
Therefore, only if
$|\tilde{\mathfrak{e}}|^2=0$ (i.e. $\tilde{\rho}_e(s) \mathfrak{J}(\Sigma)=0$),
$|\tilde{\mathfrak{b}}|^2=1$ so that
$\tilde{\rho}_b\mathfrak{J}(\Sigma)=\delta_{\tilde{\Gamma}(s)}\tilde{\mathfrak{b}}$ is a 
pure state.
\qed
\end{proof}

\subsection{Orbit of vortex filaments}
\label{subsec:orbit_of_loop}

In Lemma\,\ref{lemma:pure-state-1}, we found that the projection of a pure-state 2-form $\mathfrak{J}(\Sigma)$
onto a proper-time plane $\tilde{\Xi}(s)$ yields
a pure-state 3-form $\mathfrak{J}_b(\tilde{\Gamma}(s))$ ($\bm{B}$-filament)
iff $i_U \mathfrak{J}(\Sigma)=0$, implying that the surface $\Sigma$ must be the \emph{orbit} of the
loop $\tilde{\Gamma}(s)$.
Here we construct $\mathfrak{J}(\Sigma)$ from $\mathfrak{J}_b(\tilde{\Gamma}(s))$
along the orbit.
The components ($\mathfrak{m}_{\mu\nu}$) of the pure-state 2-form $\mathfrak{J}(\Sigma)$
(which is the singular counterpart of the vorticity 2-form $\mathcal{M}=\rmd\mathcal{P}$) 
must be consistent with the dynamics equation.

As remarked in Sec.\,\ref{subsec:diffeo}, however, we may not solve the full set of equations to
determine $\mathfrak{J}(\Sigma)$.  
What we are going to construct are the elements of $\mathfrak{J}(\Sigma)$ that are
consistent with given $U$ and $\theta$
(whereas $U$ is related to $\mathcal{P}$, and is an unknown variable in the fluid/plasma equations;
$\theta$ must be consistent with $U$ through a relation $i_U\rmd\theta=0$).
We also note that the equation of motion must be generalized when we consider pure-state vorticities 
that are not regular functions obeying differential equations in the classical sense;
accordingly, the helicity conservation law must be reformulated to be amenable to the singular vorticities 
---this will be the task of Sec.\,\ref{subsec:toplogy*}.

Before the construction of $\mathfrak{J}(\Sigma)$, we generalize 
the equation of motion (\ref{EQM-dif-form}) in order to incorporate singular vorticities of pure states into the solutions. 
Let us first rewrite it as
\begin{equation}
i_U [\mathcal{M} + \mathcal{Q}] = 0,
\label{EQM-dif-form-2}
\end{equation}
where 
\begin{equation}
\mathcal{Q}= c^{-1} \mathcal{U}\wedge \rmd \theta 
\label{EQM-dif-form-2'}
\end{equation}
with $ \mathcal{U} = U^\mu \rmd x_\mu$.
By $i_U \mathcal{U} = 1$ and $i_U\rmd\theta=0$, it is evident that (\ref{EQM-dif-form}) and (\ref{EQM-dif-form-2}) are equivalent.
The \emph{adjoint form} of (\ref{EQM-dif-form-2}) is
\begin{equation}
\int_{\mathbb{R}^4} a\wedge *i_U [\mathcal{M} + \mathcal{Q} ] = 0.
\label{EQM-adjoint}
\end{equation}
In the present context, it is natural to consider the left-hand side of
(\ref{EQM-adjoint}) to be a linear form on continuous 1-forms.
If $\mathcal{M}$ satisfies (\ref{EQM-adjoint}) for every continuous 1-form $a$,
we say that $\mathcal{M}$ is a \emph{generalized solution} of the equation of motion (\ref{EQM-dif-form}).

As we have shown in Lemma\,\ref{lemma:support}, the support of the vorticity $\mathcal{M}=\rmd\mathcal{P}$
of a regular solution is co-moving with the four-dimensional flow $\mathcal{T}_U(s)$ 
(i.e. frozen into the fluid).
Therefore, a \emph{generalized solution} of the equation of motion (\ref{EQM-Lie-form}),
which may be regarded as some limit of regular solutions,
must be co-moving.
Let $\tilde{\Gamma}_0$ be an initial ($s=0$) single vortex filament 
which is a 1-cycle of class $C^1$ bounding a disk in $\tilde{\Xi}(0)$.
The \emph{orbit} of the single vortex filament is, 
denoting $\tilde{\Gamma}(s)= \mathcal{T}_U(s) \tilde{\Gamma}_0 $,
\begin{equation}
\Sigma = \bigcup_{s\in\mathbb{R}} \tilde{\Gamma}(s).
\label{orbit}
\end{equation}
Or, the family $\{\mathcal{T}_U(s)\tilde{\Gamma}(0)\}_{s \in \mathbb{R}}$ 
is the motion picture of the loop along $U$, 
which is a representation of a surface link\,\,\cite{motion_picture}.
We construct (using given $U$ and $Q$) a pure-state vortex $\mathfrak{J}(\Sigma)$ from an initial pure-state $\bm{B}$-filament $\mathfrak{J}_b(\tilde{\Gamma}_0) = \delta_{\tilde{\Gamma}_0} \tilde{\mathfrak{b}}$.
In Sec.\,\ref{subsec:pure-state}, we derived $\mathfrak{J}_b(\tilde{\Gamma}(s))$ from $\mathfrak{J}(\Sigma)$ 
by operating the projector $\tilde{\rho}_b(s)$.  
Here, we construct $\mathfrak{J}(\Sigma)$ from $\mathfrak{J}_b(\tilde{\Gamma}_0)$ by an
inverse map of $\tilde{\rho}_b(s)$, which, however, is not injective;
we must choose an appropriate inverse that is consistent with the (adjoint) equation of motion.

\begin{lemma}
\label{lemma:orbit}
Let $\Sigma$ be an orbit of a single loop $\tilde{\Gamma}(s)=\mathcal{T}_U(s)\tilde{\Gamma}_0$.
On $\tilde{\Gamma}(s)$ $(\forall s\in\mathbb{R})$, 
a pure-state $\bm{B}$-filament 
$\mathfrak{J}_b(\tilde{\Gamma}(s))=\delta_{\tilde{\Gamma}(s)}\tilde{\mathfrak{b}}$
and an $\bm{E}$-filament $\mathfrak{j}_e(\tilde{\Gamma}(s)) 
=-\delta_{\tilde{\Gamma}(s)} * \tilde{\mathfrak{e}}$ are given.
We define a pair of 2-forms, in the vicinity of $\tilde{\Gamma}(s)$, by
\begin{eqnarray}
\mathfrak{m}_b &=& -i_U \tilde{\mathfrak{b}} ,
\label{Gamma-orbit-b}
\\
\mathfrak{m}_e &=& \mathcal{U}\wedge \tilde{\mathfrak{e}} ,
\label{Gamma-orbit-e}
\end{eqnarray}
where $\mathcal{U}$ is the 1-form such that $i_U\mathcal{U}=1$.
\begin{enumerate}
\item
The $s$-plane projections of
the singular 2-forms $\delta_\Sigma \mathfrak{m}_b$ and $\delta_\Sigma \mathfrak{m}_e$ yield
\begin{eqnarray}
& &
\tilde{\rho}_b(s) \delta_\Sigma \mathfrak{m}_b = \mathfrak{J}_b(\tilde{\Gamma}(s)),
\quad \tilde{\rho}_e(s) \delta_\Sigma \mathfrak{m}_b =0,
\label{Gamma-orbit-b-projection}
\\
& &
\tilde{\rho}_e(s) \delta_\Sigma \mathfrak{m}_e = \mathfrak{j}_e(\tilde{\Gamma}(s)),
\quad \tilde{\rho}_b(s) \delta_\Sigma \mathfrak{m}_e = 0
\label{Gamma-orbit-e-projection}
\end{eqnarray}

\item
Let us put
\begin{eqnarray}
\mathcal{M} &=& \delta_\Sigma (\mathfrak{m}_b + \mathfrak{m}_e),
\label{Gamma-orbit-M}
\\
\mathcal{Q} &=& -\delta_\Sigma \mathfrak{m}_e.
\label{Gamma-orbit-Q}
\end{eqnarray}
If $\tilde{\mathfrak{e}}=-c^{-1}\rmd\theta$, then 
$\mathcal{M}$ together with $\mathcal{Q}$
satisfy the adjoint equation of motion (\ref{EQM-adjoint}).
\item
$\mathfrak{J}(\Sigma)= \delta_{\Sigma} \mathfrak{m}_b$ is a pure-state 2-form.
\end{enumerate}
\end{lemma}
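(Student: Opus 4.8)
The plan is to verify the three claims in order, leaning on the explicit Lorentz-transformation formulas (\ref{Lorentz-transformation-B})--(\ref{Lorentz-transformation-E}) established in Lemma\,\ref{lemma:pure-state-1} and on the projector definitions (\ref{s-projector_to_B})--(\ref{s-projector_to_E'}). For part (1), I would compute the four projections directly. The key observation is that $\mathfrak{m}_b=-i_U\tilde{\mathfrak{b}}$ is, by construction, a 2-form annihilated by $i_U$ (since $i_U i_U=0$), so that $\mathcal{U}\wedge\mathfrak{m}_b$ recovers exactly $\tilde{\mathfrak{b}}$ on the $s$-plane, while $i_U(\mathcal{U}\wedge\mathfrak{m}_b)=\mathfrak{m}_b-\mathcal{U}\wedge i_U\mathfrak{m}_b=\mathfrak{m}_b$ has no $\mathcal{U}$-component and is killed by $\tilde{\rho}_e(s)$ via (\ref{s-projector_to_E'}). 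Dually, $\mathfrak{m}_e=\mathcal{U}\wedge\tilde{\mathfrak{e}}$ carries a factor $\mathcal{U}$, so $\mathcal{U}\wedge\mathfrak{m}_e=0$ forces $\tilde{\rho}_b(s)\delta_\Sigma\mathfrak{m}_e=0$, whereas $i_U\mathfrak{m}_e=\tilde{\mathfrak{e}}-\mathcal{U}\wedge i_U\tilde{\mathfrak{e}}=\tilde{\mathfrak{e}}$ yields the $\bm{E}$-filament through (\ref{s-projector_to_E'}). These are short algebraic identities in the interior-product/wedge calculus once one uses $i_U\mathcal{U}=1$.

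For part (2), I would substitute $\mathcal{M}+\mathcal{Q}=\delta_\Sigma(\mathfrak{m}_b+\mathfrak{m}_e)-\delta_\Sigma\mathfrak{m}_e=\delta_\Sigma\mathfrak{m}_b$ and show this satisfies (\ref{EQM-adjoint}). Since $i_U\mathfrak{m}_b=i_U(-i_U\tilde{\mathfrak{b}})=0$, we get $i_U[\mathcal{M}+\mathcal{Q}]=\delta_\Sigma\, i_U\mathfrak{m}_b=0$ pointwise, so the adjoint integral vanishes for every continuous test 1-form $a$. The hypothesis $\tilde{\mathfrak{e}}=-c^{-1}\rmd\theta$ is what makes $\mathcal{Q}=-\delta_\Sigma\mathfrak{m}_e=-\delta_\Sigma\,\mathcal{U}\wedge\tilde{\mathfrak{e}}=c^{-1}\delta_\Sigma\,\mathcal{U}\wedge\rmd\theta$ agree (as a singular counterpart) with the heat form (\ref{EQM-dif-form-2'}); I would note that this identifies the $\bm{E}$-filament with the thermodynamic force, so that the singular equation of motion reproduces (\ref{EQM-dif-form-2}) in the generalized sense.

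For part (3), I must show $\mathfrak{J}(\Sigma)=\delta_\Sigma\mathfrak{m}_b$ is a genuine pure-state 2-form, i.e. it satisfies both the orthogonality condition $i_U\mathfrak{J}(\Sigma)=0$ that identifies $\Sigma$ as the orbit of $\tilde{\Gamma}(s)$, and the unit-normalization $|\bm{\mathfrak{e}}|^2-|\bm{\mathfrak{b}}|^2=1$ required in (\ref{normalization-1}). The first is immediate from $i_U\mathfrak{m}_b=0$. For normalization I would invoke Lemma\,\ref{lemma:pure-state-1}: because $\mathfrak{m}_b$ is built from the pure-state $\bm{B}$-filament $\mathfrak{J}_b(\tilde{\Gamma}(s))=\delta_{\tilde{\Gamma}(s)}\tilde{\mathfrak{b}}$ with $|\tilde{\mathfrak{b}}|^2=1$, the relation (\ref{normalization-2}) run in reverse — together with $\tilde{\rho}_e(s)\delta_\Sigma\mathfrak{m}_b=0$ giving $|\tilde{\mathfrak{e}}|^2=0$ for this component — forces $|\bm{\mathfrak{e}}|^2-|\bm{\mathfrak{b}}|^2=1$.

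The main obstacle I anticipate is bookkeeping rather than conceptual: correctly tracking the placement of $\mathcal{U}$ versus $\mathcal{U}^\dagger$, the sign flips coming from the Minkowski metric (the signs already flagged in (\ref{B-filament}), (\ref{s-projector_to_B}) and the parenthetical remark after (\ref{normalization-2})), and the interplay of the Hodge star with $i_U$ in the projectors (\ref{s-projector_to_E}) and (\ref{s-projector_to_E'}). I would fix a clean identity such as $*\,i_U\omega=\pm\mathcal{U}^\dagger\wedge *\omega$ once and for all and reuse it, so that parts (1) and (3) reduce to the two facts $i_U\mathfrak{m}_b=0$ and $\mathcal{U}\wedge\mathfrak{m}_e=0$ supplemented by the normalization from Lemma\,\ref{lemma:pure-state-1}.
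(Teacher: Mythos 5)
Your proposal is correct and follows essentially the same route as the paper's (much terser) proof: part (1) reduces to the two facts $i_U\mathfrak{m}_b=0$ and $\mathcal{U}\wedge\mathfrak{m}_e=0$, part (2) to observing that $\mathcal{M}+\mathcal{Q}=\delta_\Sigma\mathfrak{m}_b$ is annihilated by $i_U$, and part (3) to the kernel/normalization criterion of Lemma\,\ref{lemma:pure-state-1}. The only detail worth pinning down beyond $i_U\mathcal{U}=1$ is that $\mathcal{U}\wedge\tilde{\mathfrak{b}}=0$ and $i_U\tilde{\mathfrak{e}}=0$ (both immediate from the Lorentz-transformation formulas (\ref{Lorentz-transformation-B})--(\ref{Lorentz-transformation-E}), or from $i_U\rmd\theta=0$), which your identities $\mathcal{U}\wedge i_U\tilde{\mathfrak{b}}=\tilde{\mathfrak{b}}$ and $i_U(\mathcal{U}\wedge\tilde{\mathfrak{e}})=\tilde{\mathfrak{e}}$ implicitly use.
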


\begin{proof}
The relations (\ref{Gamma-orbit-b-projection}) and (\ref{Gamma-orbit-e-projection}) 
follows directly from the definitions (\ref{Gamma-orbit-b}) and (\ref{Gamma-orbit-e}).
Evidently, $\tilde{\mathfrak{e}}=-c^{-1}\rmd\theta$ matches the definition (\ref{EQM-dif-form-2'}) of $\mathcal{Q}$.
We observe
\[
i_U \mathcal{M} = \delta_\Sigma i_U \mathfrak{m}_e = -i_U\mathcal{Q}.
\]
Hence, $i_U(\mathcal{M}+\mathcal{Q}) =0$, satisfying (\ref{EQM-adjoint}) for every continuous 1-form $a$.
By Lemma\,\ref{lemma:pure-state-1}, $\mathfrak{J}(\Sigma)$ is a pure state,
iff the $\bm{E}$-filament $\tilde{\rho}_e\mathfrak{J}(\Sigma)=0$.
By (\ref{Gamma-orbit-b-projection}), $\mathfrak{J}(\Sigma)\in\textrm{Ker}(\tilde{\rho}_e)$,
thus, the third statement of this Lemma is proven.
\qed
\end{proof}

In the definition (\ref{Gamma-orbit-M}), the total vorticity $\mathcal{M}$ consists of two parts,
$\delta_\Sigma \mathfrak{m}_b \in\textrm{Ker}(\tilde{\rho}_e)$
and $\delta_\Sigma \mathfrak{m}_e \in\textrm{Ker}(\tilde{\rho}_b)$.
As we have shown, the first part (denoted by $\mathfrak{J}(\Sigma)$) is a pure state; 
hence, the total $\mathcal{M}$ is not a pure state, if the second part (denoted by $-\mathcal{Q}$) is non-zero,
i.e. $\tilde{\mathfrak{e}}=-c^{-1}\rmd\theta\neq 0$.
This is the reason why the conventional helicity conservation is broken in a relativistic fluid
(see Sec.\,\ref{subsec:semi-relativistic}).
However, the relativistic helicity does conserve;
we have yet to prove this fact for singular vorticities.

\subsection{Helicity conservation, circulation theorem and linking number}
\label{subsec:toplogy*}

By Lemma \ref{lemma:orbit}, we have constructed the vorticity
$\mathcal{M}= \delta_\Sigma (\mathfrak{m}_b + \mathfrak{m}_e)$
satisfying the adjoint equation of motion (\ref{EQM-adjoint}).
The support $\Sigma$ of the vorticity is the orbit of  
the vortex filament $\tilde{\Gamma}(s)$ that is transported by the diffeomorphism 
$\mathcal{T}_U(s)$.
Hence, different vortex filaments do not intersect in space-time;
given disjoint 1-cycles $\tilde{\Gamma}_{1}(s)$ and $\tilde{\Gamma}_{2}(s)$, 
the corresponding surfaces $\Sigma_{1}$ and $\Sigma_{2}$ do not intersect, thus their temporal ($s$ or even $t$) cross-sections conserve
the linking number.
To put this fact in the perspective of the helicity conservation law,
we need to re-formulate the helicity for the singular vorticity
(see Remark\,\ref{remark:generalize_Stokes_formula}).

Here we consider a pair of disjoint loops (1-cycles bounding disks) $\tilde{\Gamma}_{1}(s)$ and $\tilde{\Gamma}_{2}(s)$,
and their orbits
$\Sigma_{1}=\bigcup\tilde{\Gamma}_{1}(s)$
and $\Sigma_{2}=\bigcup\tilde{\Gamma}_{2}(s)$.
The total vorticity is the combination of twin vorticities: 
\begin{equation}
\mathcal{M}=\mathcal{M}_{1}+\mathcal{M}_{2}
= \delta_{\Sigma_{1}}(\mathfrak{m}_{b,1} + \mathfrak{m}_{e,1}) 
+ \delta_{\Sigma_{2}}(\mathfrak{m}_{b,2} + \mathfrak{m}_{e,2}),
\label{twin-vorticity}
\end{equation}
where $\mathfrak{m}_{b,\ell}$ and $\mathfrak{m}_{e,\ell}$, respectively,
stem from pure-state $\bm{B}$-filaments $\mathfrak{J}_b(\tilde{\Gamma}_{\ell}(s))$ and
$\bm{E}$-filaments $\mathfrak{j}_e(\tilde{\Gamma}_{\ell}(s))$ ($\ell=1,2$), as constructed in Lemma\,\ref{lemma:orbit}.

Let $V(s)$ be a co-moving temporally-thin volume,
i.e. $V(s)\subset \tilde{\Xi}(s)$ (see Sec.\,\ref{subsec:R-helicity}).
For an arbitrary continuous 1-form $a$, we obtain,
using (\ref{Gamma-orbit-b-projection}),
\begin{eqnarray}
\int_{V(s)} a\wedge \mathcal{M}
&=& \int_{V(s)\cap(\Sigma_{1}\cup\Sigma_{2})} a
\nonumber
\\
&=& \int_{\tilde{\Gamma}_{1}(s)} a + \int_{\tilde{\Gamma}_{2}(s)} a.
\label{circulation-0}
\end{eqnarray}
The final expression is nothing but the \emph{circulation} of the 1-form $a$
along the cycles $\tilde{\Gamma}_{1}(s)$ and $\tilde{\Gamma}_{2}(s)$.
To use the formula (\ref{circulation-0}) in order to evaluate the helicity
$\int_{V(s)}\mathcal{P}\wedge\mathcal{M}$,
we have to insert $\mathcal{P}$ into $a$,
and then, we have to relate $\mathcal{P}$ with $\mathcal{M}$ by
inverting the defining relation $\rmd\mathcal{P}=\mathcal{M}$;
let us formally write
\begin{equation}
\mathcal{P}=\mathcal{F}\mathcal{M}.
\label{LW-formal}
\end{equation}
The operator $\mathcal{F}$ will be
explicitly defined in Lemma\,\ref{lemma:LW-operator}.
Here, we remark that the vorticity $\mathcal{M}$ of (\ref{twin-vorticity})
consists of $\delta$-measures, thus $\mathcal{P}$ is not continuous at $\Sigma_\ell$.
This difficulty can be removed by decomposing $\mathcal{P}$ as
\begin{equation}
\mathcal{P}=\mathcal{P}_1 + \mathcal{P}_2 
= \mathcal{F}\mathcal{M}_1 + \mathcal{F}\mathcal{M}_2,
\label{LW-formal-decomposed}
\end{equation}
and putting the ``self-field helicity'' zero (see Remark\,\ref{remark:generalize_Stokes_formula}-2),
i.e.
\begin{equation}
\int_{V(s)} \mathcal{P}_\ell\wedge \mathcal{M}_\ell
= \int_{\tilde{\Gamma}_{\ell}(s)} \mathcal{P}_\ell =0
\quad (\ell= 1,2).
\label{self-field_helicity}
\end{equation}
Then, the relativistic helicity of the twin vorticity (\ref{twin-vorticity}) evaluates as
\begin{equation}
\mathfrak{C}(s) = \int_{V(s)} \mathcal{P}\wedge \mathcal{M}
= \int_{\tilde{\Gamma}_1(s)} \mathcal{P}_2
 + \int_{\tilde{\Gamma}_2(s)} \mathcal{P}_1.
\label{circulation-1}
\end{equation}

Let us make the operator $\mathcal{F}$ explicit.

\begin{lemma}
\label{lemma:LW-operator}
We denote $\delta = *\rmd*$, and $\square = \delta\rmd + \rmd\delta$ (d'Alembertian).
We invert $\square$ by the Li\'enard-Wiechert integral operator, which we denote by $\square^{-1}$.
In (\ref{LW-formal}), we can define
\begin{equation}
\mathcal{P}= \mathcal{F}\mathcal{M} = \square^{-1} \delta \mathcal{M}.
\label{LW-oprator}
\end{equation}
\end{lemma}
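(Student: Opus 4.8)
The plan is to verify that the operator $\mathcal{F}=\square^{-1}\delta$ indeed inverts the defining relation $\rmd\mathcal{P}=\mathcal{M}$, given that $\mathcal{M}$ is a closed 2-form (which it is, being constructed from $\rmd\mathcal{P}$, so $\rmd\mathcal{M}=0$). The key identity is the Hodge--de Rham decomposition specialized to the Minkowski setting: since $\square=\delta\rmd+\rmd\delta$ and $\square$ commutes with $\rmd$, I would compute $\rmd\mathcal{P}=\rmd\square^{-1}\delta\mathcal{M}=\square^{-1}\rmd\delta\mathcal{M}$ and then add and subtract the missing term to reconstruct the full d'Alembertian acting on $\mathcal{M}$.

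Concretely, first I would write
\begin{equation}
\rmd\mathcal{P} = \rmd(\square^{-1}\delta\mathcal{M}) = \square^{-1}(\rmd\delta)\mathcal{M},
\label{F-step1}
\end{equation}
using that $\square^{-1}$ commutes with $\rmd$ (both are translation-invariant constant-coefficient operators, so this holds at the level of the Li\'enard--Wiechert convolution kernel). Next I would invoke the closedness $\rmd\mathcal{M}=0$, hence $\delta\rmd\mathcal{M}=0$, which lets me insert the second half of the d'Alembertian for free:
\begin{equation}
\rmd\delta\,\mathcal{M} = (\rmd\delta+\delta\rmd)\mathcal{M} = \square\mathcal{M}.
\label{F-step2}
\end{equation}
Substituting (\ref{F-step2}) into (\ref{F-step1}) gives $\rmd\mathcal{P}=\square^{-1}\square\,\mathcal{M}=\mathcal{M}$, which is exactly the inversion we need.

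The step requiring the most care is the justification of $\square^{-1}\square=\mathrm{id}$ on the relevant class of forms. Since $\mathcal{M}=\delta_\Sigma(\mathfrak{m}_b+\mathfrak{m}_e)$ is a singular current, not a smooth compactly-supported form, I would argue that the Li\'enard--Wiechert kernel is the retarded fundamental solution of $\square$ (the relativistic analogue of the Biot--Savart kernel of (\ref{Biot-Savart})), so that $\square\square^{-1}=\square^{-1}\square$ acts as the identity in the distributional sense on currents with the support/decay properties guaranteed by Theorem\,\ref{theorem:helicity_conservation} (compact $s$-cross-sections). The main obstacle is therefore not the formal algebra of $\rmd$ and $\delta$, but rather confirming that the gauge implicitly fixed by the choice $\mathcal{P}=\square^{-1}\delta\mathcal{M}$ is admissible: this choice automatically satisfies the Lorenz-type condition $\delta\mathcal{P}=\delta\square^{-1}\delta\mathcal{M}=\square^{-1}\delta\delta\mathcal{M}=0$ (since $\delta^2=0$), so $\mathcal{P}$ is the canonical divergence-free potential, and one must check this gauge is consistent with the regularization of $\mathcal{M}$ into smooth $\bm{\omega}_\epsilon$-type approximants as described in Remark\,\ref{remark:generalize_Stokes_formula} before passing to the singular limit.
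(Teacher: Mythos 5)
Your proof is correct but runs in the opposite direction from the paper's. The paper \emph{derives} the formula: it presupposes a potential $\mathcal{P}$ with $\rmd\mathcal{P}=\mathcal{M}$, gauge-transforms it by $\mathcal{P}\mapsto\mathcal{P}'=\mathcal{P}-\rmd\varphi$ with $\square\varphi=\delta\mathcal{P}$ so that $\delta\mathcal{P}'=0$, applies $\delta$ to $\rmd\mathcal{P}'=\mathcal{M}$ to get $\square\mathcal{P}'=\delta\mathcal{M}$, and then argues that the residual gauge term $\rmd\varphi$ may be dropped because $\int_{V(s)}\rmd\varphi\wedge\mathcal{M}=0$ when $\mathcal{M}$ vanishes on $\partial V(s)$ --- a point that matters for the subsequent use of $\mathcal{F}$ inside the helicity integral. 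You instead \emph{verify} the formula: defining $\mathcal{P}=\square^{-1}\delta\mathcal{M}$, you commute $\rmd$ past the convolution kernel and use $\rmd\mathcal{M}=0$ to upgrade $\rmd\delta$ to the full d'Alembertian, obtaining $\rmd\mathcal{P}=\square^{-1}\square\mathcal{M}=\mathcal{M}$; the Lorenz condition $\delta\mathcal{P}=0$ then comes for free from $\delta^2=0$. Your route is more self-contained (it constructs a potential rather than assuming one exists) and your attention to $\square^{-1}\square=\mathrm{id}$ on singular currents is well placed; what it omits, and what the paper's gauge-transformation argument supplies, is the observation that the helicity $\int_{V(s)}\mathcal{P}\wedge\mathcal{M}$ is insensitive to which potential one picks, so that the specific choice $\mathcal{F}=\square^{-1}\delta$ is not merely admissible but canonical for the purposes of Theorem\,\ref{theorem:singular_vortex}. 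One small caution: your justification of $\rmd\mathcal{M}=0$ (``being constructed from $\rmd\mathcal{P}$'') is circular in the setting of Sec.\,\ref{subsec:toplogy*}, where $\mathcal{M}$ is a given singular current and $\mathcal{P}$ is the object being built; closedness should instead be read off from $\partial\Sigma=\emptyset$ for the pure-state part (and checked for the $\mathfrak{m}_e$ part), though the paper's own proof tacitly makes the equivalent exactness assumption, so this is a shared hypothesis rather than a defect of your argument.
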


\begin{proof}
First we transform 
\begin{equation}
\mathcal{P} \mapsto \mathcal{P}'=\mathcal{P}-\rmd \varphi 
\label{gauge_transformation}
\end{equation}
with a scalar function $\varphi$ such that
$\delta\rmd\varphi= \square\varphi = \delta\mathcal{P}$.
Operating $\delta$ on the both sides of 
$\rmd\mathcal{P} = \rmd\mathcal{P}'=\mathcal{M}$, we obtain
$\delta\rmd\mathcal{P}' = \delta\mathcal{M}$.
The left-hand side reads $\square\mathcal{P}'$, since $\delta\mathcal{P}'=0$.  
We obtain
\begin{equation}
\mathcal{P}' = \mathcal{F}'\mathcal{M}= \square^{-1} \delta \mathcal{M}.
\label{LW-1}
\end{equation}
Transforming back to $\mathcal{P}$, we obtain
$\mathcal{P} = \square^{-1} \delta \mathcal{M} + \rmd \square^{-1}\delta\mathcal{P}$,
thus we may write
\begin{equation}
\mathcal{P} 
=\mathcal{F}\mathcal{M}
= (1-\rmd \square^{-1}\delta)^{-1} \square^{-1} \delta \mathcal{M}.
\label{LW-2}
\end{equation}
Since $\int_{V(s)} \rmd\varphi \wedge \mathcal{M}=0$
for every $\mathcal{M}=\rmd\mathcal{P}$ such that $\mathcal{M}=0$ at $\partial V(s)$,
we may replace $\mathcal{F}$ by $\mathcal{F}'=\square^{-1} \delta$ in (\ref{LW-formal}).
\qed
\end{proof}

Now the following conclusions are readily deducible:

\begin{theorem}[link in Minkowski space-time]
\label{theorem:singular_vortex}
Let $\mathcal{M}=\mathcal{M}_1+\mathcal{M}_2$ be a twin vortex generated by
a pair $(\ell=1,2)$ of pure-state $\bm{B}$-filaments $\mathfrak{J}_b(\tilde{\Gamma}_\ell(s))$,
and $\bm{E}$-filaments $\mathfrak{j}_e(\tilde{\Gamma}_\ell(s))= \delta_{\tilde{\Gamma}_\ell(s)} * c^{-1}\rmd \theta$ supported on co-moving loops $\tilde{\Gamma}_\ell(s)$.
\begin{enumerate}
\item
The relativistic helicity
\begin{equation}
\mathfrak{C}(s) = \int_{\tilde{\Gamma}_1(s)} \mathcal{F}\mathcal{M}_2
 + \int_{\tilde{\Gamma}_2(s)} \mathcal{F}\mathcal{M}_1.
\label{circulation-2}
\end{equation}
is a constant of motion.
\item
The constant $\mathfrak{C}(s)/2$ is the linking number ${L}(\tilde{\Gamma}_1(s),\tilde{\Gamma}_2(s))$,
which may be represented as (generalizing the Gauss integral)
\begin{equation}
{L}(\tilde{\Gamma}_1(s),\tilde{\Gamma}_2(s))
= \int\mathcal{F}\mathcal{M}_2\wedge \mathfrak{J}_b(\tilde{\Gamma}_1(s))
= \int\mathcal{F}\mathcal{M}_1\wedge \mathfrak{J}_b(\tilde{\Gamma}_2(s)).
\label{circulation-3}
\end{equation}
\end{enumerate}
\end{theorem}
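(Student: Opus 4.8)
The plan is to build on the circulation-integral form of the relativistic helicity already obtained for the twin vorticity (\ref{twin-vorticity}), namely $\mathfrak{C}(s)=\int_{\tilde{\Gamma}_1(s)}\mathcal{P}_2+\int_{\tilde{\Gamma}_2(s)}\mathcal{P}_1$ of (\ref{circulation-1}) with $\mathcal{P}_\ell=\mathcal{F}\mathcal{M}_\ell$ as in (\ref{LW-formal-decomposed}), in which the divergent self-field contributions have already been discarded via (\ref{self-field_helicity}). Assertion 1 (conservation) and assertion 2 (identification with the linking number, together with the Gauss-type representation (\ref{circulation-3})) are then treated separately. The geometric fact driving both is that the two orbits $\Sigma_1$ and $\Sigma_2$ are disjoint: each $\tilde{\Gamma}_\ell(s)=\mathcal{T}_U(s)\tilde{\Gamma}_{\ell,0}$ is carried by the common diffeomorphism $\mathcal{T}_U(s)$, and $\mathrm{supp}\,\mathcal{M}_\ell=\Sigma_\ell$ by Lemma\,\ref{lemma:support}.

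For assertion 1 I would show that each cross-circulation in (\ref{circulation-1}) is separately conserved, which is the singular-vorticity analogue of Theorem\,\ref{theorem:helicity_conservation}. Because $\Sigma_1$ and $\Sigma_2$ are disjoint, the Li\'enard--Wiechert field $\mathcal{P}_2=\mathcal{F}\mathcal{M}_2$ is a smooth, closed $1$-form in a tubular neighborhood of $\tilde{\Gamma}_1(s)\subset\Sigma_1$, since there $\rmd\mathcal{P}_2=\mathcal{M}_2=0$. Consequently $L_U\mathcal{P}_2=i_U\rmd\mathcal{P}_2+\rmd\,i_U\mathcal{P}_2=\rmd(i_U\mathcal{P}_2)$ is exact on the loop, and the transport formula used in Lemma\,\ref{lemma:circulation_law} gives $\frac{\rmd}{\rmd s}\oint_{\tilde{\Gamma}_1(s)}\mathcal{P}_2=\oint_{\tilde{\Gamma}_1(s)}L_U\mathcal{P}_2=\oint_{\tilde{\Gamma}_1(s)}\rmd(i_U\mathcal{P}_2)=0$. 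The same holds with the indices exchanged, so $\mathfrak{C}(s)$ is a constant of motion. This is the analytic shadow of the topological fact that disjoint orbits keep $\tilde{\Gamma}_1(s)$ and $\tilde{\Gamma}_2(s)$ from ever crossing.

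For assertion 2 I would reproduce, on the proper-time plane $\tilde{\Xi}(s)$, the three-dimensional computation (\ref{Stokes})--(\ref{Stokes2}) through the generalized Stokes formula of Remark\,\ref{remark:generalize_Stokes_formula}. Choosing a $2$-disk $D_1\subset\tilde{\Xi}(s)$ with $\partial D_1=\tilde{\Gamma}_1(s)$, I write $\int_{\tilde{\Gamma}_1(s)}\mathcal{P}_2=\int_{D_1}\rmd\mathcal{P}_2=\int_{D_1}\mathcal{M}_2$. The part $\delta_{\Sigma_2}\mathfrak{m}_{e,2}=\delta_{\Sigma_2}\,\mathcal{U}\wedge\tilde{\mathfrak{e}}$ carries the factor $\mathcal{U}$, which is transverse to $\tilde{\Xi}(s)$ and pulls back to zero on $D_1$ (equivalently $\delta_{\Sigma_2}\mathfrak{m}_{e,2}\in\mathrm{Ker}\,\tilde{\rho}_b(s)$ by (\ref{Gamma-orbit-e-projection})), so it contributes nothing to the spatial flux; only the pure-state part $\mathfrak{J}(\Sigma_2)=\delta_{\Sigma_2}\mathfrak{m}_{b,2}$ survives. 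Being the Poincar\'e-dual current of $\Sigma_2$, its flux through $D_1$ equals the signed count of the intersection points $D_1\cap\Sigma_2=D_1\cap\tilde{\Gamma}_2(s)$, i.e. $L(\tilde{\Gamma}_1(s),\tilde{\Gamma}_2(s))$. Symmetry of the linking number gives the same value for $\int_{\tilde{\Gamma}_2(s)}\mathcal{P}_1$, whence $\mathfrak{C}(s)=2L(\tilde{\Gamma}_1(s),\tilde{\Gamma}_2(s))$. Finally, rewriting $\int_{\tilde{\Gamma}_1(s)}\mathcal{F}\mathcal{M}_2$ through the defining property $\int_M\mathfrak{J}_b(\tilde{\Gamma}_1(s))\wedge a=\int_{\tilde{\Gamma}_1(s)}a$ of the pure-state filament (Definition\,\ref{definition:pure_state}), with $a=\mathcal{F}\mathcal{M}_2$, reproduces the Gauss-type representation (\ref{circulation-3}) (the order of the wedge being fixed by the orientation convention there).

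The hard part is the rigorous treatment of the $\delta$-measure vorticities, exactly as flagged in Remark\,\ref{remark:generalize_Stokes_formula}: one must justify Stokes' formula for the singular $\mathcal{M}_2$ by the smoothing limit described there, and must verify that the self-interaction circulation (\ref{self-field_helicity}) vanishes, via a homotopy $\tilde{\Gamma}_\epsilon\to\tilde{\Gamma}_\ell$ realizing $L(\tilde{\Gamma}_\ell,\tilde{\Gamma}_\ell)=0$. A secondary point I would check explicitly is that $\mathcal{P}_\ell=\square^{-1}\delta\mathcal{M}_\ell$ is genuinely smooth in the tubular neighborhood used in assertion 1: since $\square^{-1}$ is the retarded (Li\'enard--Wiechert) inverse whose kernel is supported on the light cone of $\Sigma_\ell$, off-source smoothness near $\tilde{\Gamma}_k(s)$ requires that neighborhood to avoid the light cone of $\Sigma_\ell$, which holds for the spatially separated cross-sections but deserves to be stated.
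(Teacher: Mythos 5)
Your proposal is correct and follows essentially the same route as the paper: conservation via the transport formula $\tfrac{\rmd}{\rmd s}\oint_{\tilde{\Gamma}_{\ell'}(s)}\mathcal{F}\mathcal{M}_\ell=\oint L_U(\mathcal{F}\mathcal{M}_\ell)$ (the paper kills the remaining term as $\oint i_U\mathcal{M}_\ell=-c^{-1}\oint\rmd\theta=0$, you kill it by disjointness of $\Sigma_1,\Sigma_2$ making $\mathcal{F}\mathcal{M}_\ell$ closed near the other loop — equivalent here), and the linking number via Stokes on a disk in $\tilde{\Xi}(s)$ plus the pure-state property of $\mathfrak{J}_b(\tilde{\Gamma}_\ell(s))$ counting signed intersections, with the self-field circulation discarded as in Remark\,\ref{remark:generalize_Stokes_formula}. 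Your closing observation about the retarded kernel's light-cone support is a point the paper does not address, but it does not change the argument.
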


\begin{proof}
From the foregoing derivation, it is clear that (\ref{circulation-2}) is an
appropriate expression of the relativistic helicity generalizing (\ref{4Dhelicity}).
Since, $\mathcal{P}_1 = \mathcal{F}\mathcal{M}_1$ and $\mathcal{P}_2=\mathcal{F}\mathcal{M}_2$
are smooth (holomorphic) functions in the vicinities of $\tilde{\Gamma}_2(s)$ and $\tilde{\Gamma}_1(s)$,
respectively, and satisfies the equation of motion (\ref{EQM-Lie-form}) (Lemma\,\ref{lemma:orbit}),
we can apply Lemma\,\ref{lemma:circulation_law} (circulation law) to prove the constancy of
$\mathfrak{C}(s)$.
Or, we can calculate directly as
\begin{eqnarray}
\frac{\rmd}{\rmd s} \int_{\tilde{\Gamma}_{\ell'}(s)}\mathcal{F}\mathcal{M}_\ell
&=& \int_{\tilde{\Gamma}_{\ell'}(s)} L_U (\mathcal{F}\mathcal{M}_\ell)
\nonumber
\\
&=& \int_{\tilde{\Gamma}_{\ell'}(s)} i_U \mathcal{M}_\ell
\nonumber
\\
&=& \int_{\tilde{\Gamma}_{\ell'}(s)} \tilde{\mathfrak{e}}_\ell 
= -c^{-1}\int_{\tilde{\Gamma}_{\ell'}(s)} \rmd\theta
= 0,
\label{circulation-4}
\end{eqnarray}
where $(\ell,\ell')=(1,2)$ or $(2,1)$.
For every loop $\tilde{\Gamma}$ bounding a disk $\tilde{\sigma}\subset\tilde{\Xi}(s)$,
we observe
\begin{eqnarray*}
\int_{\tilde{\Gamma}}\mathcal{F}\mathcal{M}_\ell
&=& \int_{\tilde{\sigma}}\mathcal{M}_\ell
\\
&=& \int_{\bigcup_s\tilde{\sigma}(s)}-\delta_{\tilde{\Xi}(s)}\mathcal{U}\wedge\mathcal{M}_\ell
= \int_{\bigcup_s\tilde{\sigma}(s)} \mathfrak{J}_b(\tilde{\Gamma}_\ell(s)),
\end{eqnarray*}
where $\bigcup_s\tilde{\sigma}(s)$ is the orbit of $\tilde{\sigma}(s)$.
Since each $\mathfrak{J}_b(\tilde{\Gamma}_\ell(s))$ is a pure state,
the right-hand side yields $\pm1$, 
iff the loops $\tilde{\Gamma}_\ell(s)$ and $\tilde{\Gamma}=\partial\tilde{\sigma}(s)$ link
in $\tilde{\Xi}(s)$
(the sign depends the orientations of the loops $\tilde{\Gamma}_\ell$ and $\tilde{\Gamma}$), i.e.
$\int_{\tilde{\Gamma}}\mathcal{F}\mathcal{M}_\ell=L(\tilde{\Gamma},\tilde{\Gamma}_\ell)$.
Hence, we conclude that $\mathfrak{C}(s)/2$ is the
linking number of loops contained in an $s$-plane
$\tilde{\Xi}(s)$.
\qed
\end{proof}

\begin{remark}[separation of pure-state vorticity]
\label{remark:Subtraction_of_baroclinic_term}
From (\ref{circulation-4}), it is evident that a non-exact $\bm{E}$-filament
in a baroclinic fluid
(i.e. $\tilde{\mathfrak{e}}_\ell = -c^{-1} T\rmd S$ with a temperature $T$ and an
entropy $S$; see Sec.\,\ref{subsec:field-tensor})
brings about a change in the helicity (or the circulation).
It is also clear that the helicity (or the circulation)
evaluated only by the pure-state part $\delta_\Sigma\mathfrak{m}_{b,\ell}$ of the
vorticity $\mathcal{M}_\ell$ is conserved even in a baroclinic fluid.
Conservation of such a reduced helicity (or a reduced circulation) has been 
noticed in NR formulations of fluid mechanics; see 
\cite{Eckart,Mobbs,Fukumoto}
\end{remark}

\section{Discussion}
\label{sec:discussion}

This work was given its motivation by the finding of non-conservation of helicity
(or circulation) in a relativistic fluid\,\cite{Mahajan2003,MahajanYoshida2010}.
The relativistic distortion of space-time, measured by $\nabla \gamma\neq 0$,
yields relativistic baroclinic effect on
a thermodynamically barotropic fluid, and violates the conservation of helicity.
Since the vorticity is dressed by a magnetic field in a high-energy charged fluid,
the breaking of the helicity constraint gives rise to a seed (cosmological) magnetic field.
The aim of this work was set to unearth an alternative, generalized conservation law 
that dictates a deeper topological constraint beneath the superficial (or reference-frame dependent) non-conservation.

We have introduced the relativistic (Lorentz invariant) helicity (\ref{4Dhelicity}), 
which is conserved (with respect to the proper time) in a thermodynamically barotropic fluid
(Theorem\,\ref{theorem:helicity_conservation}).
Considering a pair of pure-state vorticity filaments (relativistic $\bm{B}$-filaments),
we have shown that the helicity-conservation law means the constancy of the linking number
in the proper-time cross-section of space-time
(Theorem\,\ref{theorem:singular_vortex}).

As shown in Sec\,\ref{sec:R-helicity},
the semi-relativistic helicity $C(t)$ ceases to be
a constant of motion in a relativistic fluid with $\rmd\theta\neq0$.  
The reason why it can change is NOT because vortex loops change their link
(the linking number of the $t$-cross-sections of $\Sigma_1$ and $\Sigma_2$ does not change),
but is because the circulation changes on the loops.
In another word, a $\bm{B}$-filament on a $t$-plane ($\mathfrak{j}_b(\Gamma(t))$) 
is not a pure state; 
instead, the pure state is the relativistic $\bm{B}$-filament on an $s$-plane (Lemma\,\ref{lemma:orbit}).
Interestingly, however, $C(t)$ does conserve if $\rmd\theta=0$ (i.e. in a homentropic fluid);
$C(t)$ must be, then, a linking number on a $t$-plane.
To see how $C(t)$ conserves,
we may replace the co-moving volume $V(s)$ in the definition of $\mathfrak{C}(s)$ by
$V(t)=\mathcal{T}_u(t)V_0$, where $\mathcal{T}_u(t)$ is the diffeomorphism generated by
the reference-frame 4-vector $u = c^{-1}\rmd x/\rmd t$; see (\ref{NR_4-velocity}).
Then, $\mathfrak{C}(s)$ converts to the semi-relativistic $C(t)$,
and Theorem\,\ref{theorem:helicity_conservation} modifies to conclude $\rmd C(t)/\rmd t=0$;
to prove this,
we just replace $s$ by $t$ and $U$ by $u$ in the proof of Theorem\,\ref{theorem:helicity_conservation}
as well as in Lemma\,\ref{lemma:circulation_law} and Lemma\,\ref{lemma:support}.
It is evident that these replacements applies as far as $\rmd\theta=0$.
Since the modified Lemma\,\ref{lemma:circulation_law} shows the conservation of the
circulation on every loop $\Gamma(t)$ on the $t$-plane,
the constant $C(t)$ can be made to measure the linking number of twin vortex filaments.
However, we have to apply a different normalization of the $\bm{B}$-filaments on the $t$-plane;
we set $|\bm{\mathfrak{b}}_\ell|^2=1$, instead of $|\tilde{\mathfrak{b}}_\ell|^2=1$,
to let $\mathfrak{j}_b(\Gamma_\ell(t))$ be a pure state.
By (\ref{normalization-2}), these two different normalizations conflict with each other,
because $|\bm{\mathfrak{e}}_\ell|^2\neq0$ whenever $\bm{v}\neq0$;
hence two constants $C(t)$ and $\mathfrak{C}(s)$ have different values.
It is needless to say that $\mathfrak{j}_b(\Gamma_\ell(t))$ can remain as a pure state
only if $\rmd\theta=0$.

We end this paper with a short summary of helicities in different systems and their comparisons.
For the EM potential $A=A_\mu\rmd x^\mu$, 
the helicity density is the 3-form $\mathcal{K}= A\wedge(\rmd A)= -A_\nu F^{*\mu\nu}*\rmd x_\mu$
($F^{*\mu\nu}$ the dual of the Faraday tensor),
which may be viewed as a 4-current in the Minkowski space-time.
The 0-component $\mathcal{K}^0$ is the familiar magnetic helicity density $\bm{A}\cdot(\nabla\times\bm{A})$.
The divergence of the current $\rmd\mathcal{K}$ reads $(1/2)F_{\mu\nu}F^{*\mu\nu}=-2\bm{E}\cdot\bm{B}$ 
with the standard EM fields $\bm{E}$ and $\bm{B}$.
The total ``charge'' $\int_X \mathcal{K}^0\rmd^3x$, the magnetic helicity, is invariant 
if $\rmd\mathcal{K}=0$
(remember the discussion in Sec.\,\ref{subsec:semi-relativistic}).
For example, a \emph{null EM field}\,\cite{Bateman}, such that
$F_{\mu\nu}F^{\mu\nu} = F_{\mu\nu}F^{*\mu\nu} = 0$, propagates in the vacuum with conserving the helicity (see \cite{Kedia} for the examples of knotted EM fields in light).
Also in an ideal MHD system, $\bm{E}\cdot\bm{B}=0$, thus the 
conventional magnetic helicity conserves 
(hence, the link of magnetic field lines is invariant; cf.\,\cite{Pegoraro}).
When dressed by the fluid-mechanical momentum, however, the helicity density $\mathcal{K}=\mathcal{P}\wedge\rmd\mathcal{P}$
is no longer divergence-free (excepting the simplest homentropic fluid),
and thus, the total charge is not conserved in the relativistic regime
(Sec.\,\ref{subsec:semi-relativistic}).
In the non-relativistic limit ($\gamma=1$), however,
the heat term $T\rmd S=\rmd\theta$ of a barotropic 
fluid may be absorbed by the enthalpy term to modify the helicity density to be divergence-free 
(Remark\,\ref{remark:NR-barotropic});
by subtracting $2c^{-1}\theta\mathcal{B}^j$ from the spacial part $\mathcal{K}^j$, we define
a modified helicity density
\begin{eqnarray*}
\mathcal{K}'^{\mu} = \left(\bm{\mathcal{A}}\cdot\bm{\mathcal{B}},\, \mathcal{A}_0\bm{\mathcal{B}} - \bm{\mathcal{A}}\times\bm{\mathcal{B}} - 2c^{-1}\theta\bm{B}\right).
\end{eqnarray*}
Because the right-hand side of (\ref{e:div K}) may be written as 
$\nabla\cdot(2c^{-1}\theta\bm{\mathcal{B}})$, we obtain $\partial_\mu \mathcal{K}'^\mu=0$.
The closed 3-form $\mathcal{K}'$ is a \emph{Noether current}
pertinent to the \emph{relabeling symmetry} of the action~\cite{Noether1,Noether2,Noether3,Noether4}.
The 0-component $\mathcal{K}'^0$ is the Noether charge, and its
spatial ($t$-plane) integral is the conventional (non-relativistic) helicity $C$.
The relativistic effect ($\nabla\gamma\neq0$), however,
makes the right-hand side of (\ref{e:div K}) non-exact, thus we cannot introduce such a modified
divergence-free helicity density.
Yet, the relativistic helicity is
made invariant by integrating $\mathcal{K}$ on a co-moving domain,
because $L_U\mathcal{K}$ is an exact 3-form.
Whereas $\mathcal{K}$ is not divergence-free (because of the relativistic length contraction),
we find that the Lagrangian representation of the helicity density is divergence-free.
The Lagrangian coordinates labeling the position of each fluid element
have the redundancy (symmetry) of relabeling, and the consequent Noether current turns out to be the present relativistic helicity density after transforming into the Eulerian coordinates;
detailed analysis of the symmetry of the relativistic Lagrangian will be published elsewhere.

\acknowledgments
We acknowledge a debt to the Isaac Newton Institute for Mathematical Sciences, University of Cambridge;
this work was given a chance to start during the workshop \emph{Topological Fluid Dynamics}.
We are grateful to Professor K. Moffatt and Professor Y. Mitsumatus for their suggestions and discussions.
The work of ZY was partially supported by Grant-in-Aid for Scientific Research from
the Japanese Ministry of Education, Science and Culture No. 23224014.
The work of YK was supported by Grant-in-Aid for JSPS Fellows 241010.
The work of TY was partially supported by the JST CREST Program at Department of Mathematics,
Hokkaido University.



\begin{thebibliography}{99}

\bibitem{link}
Loops do not link in four-dimensional space.  
In general, we have the following theorem
which is readily deducible by the Thom transversality:
Let $A$ and $B$ be, respectively, $\mu$ and $\nu$-dimensional submanifolds embedded in 
$\mathbb{R}^n$.
If $n-(\mu+\nu)\geq 2$, then 
$B$ is contractible in $\mathbb{R}^n - A$ and 
$A$ is contractible in $\mathbb{R}^n - B$.
See M. W. Hirsch, \textit{Differential topology}, 
Graduate Texts in Mathematics No. 33 (Springer-Verlag, New York-Heidelberg, 1976). 


\bibitem{Woltjer}
L. Woltjer,
Proc. Natl. Acad. Sci. U.S.A. \textbf{44}, 489-491 (1958).


\bibitem{Moffatt}
H. K. Moffatt, \textit{Magnetic field generation in electrically conducting fluids},
(Cambridge University Press, Cambridge, 1978).

\bibitem{Moffatt-Ricca}
H. K. Moffatt and R. Ricca,
Proc. Roy. Soc. London A \textbf{439}, 411-429 (1992). 

\bibitem{Cantarella-Parsley}
J. C. Cantarella and J. Parsley,
J. Geometry and Phys. \textbf{60} (2010) 1127--1155.

\bibitem{Borromean}
D. DeTurck, H. Gluck, R. Komendarczyk, P. Melvin, C. Shonkwiler, and D. S. Vela-Vick,
J. Math. Phys. \textbf{54}, 013515 (2003).

\bibitem{LandauLifshitz}
L. D. Landau and E. M. Lifshitz, {\it Fluid Mechanics} (2nd Ed.): Vol. 6 of Course of Theoretical Physics (Butterworth-Heinemann, 1987).


\bibitem{MahajanYoshida2010}
S. M. Mahajan and Z. Yoshida,
Phys. Rev. Lett. \textbf{105}, 095005 (2010).

\bibitem{Mahajan2003}
S. M. Mahajan, Phys. Rev. Lett. {\bf 90}, 035001 (2003).



\bibitem{Synge}
J. L. Synge, 
Proc. London Math. Soc. sec 2 \textbf{43}, 376 (1937).

\bibitem{Newcomb}
W. A. Newcomb,
Ann. Phys. (N.Y.) \textbf{3}, 347 (1958).

\bibitem{Pegoraro}
F. Pegoraro,
Eur. Phys. Lett. \textbf{99}, 35001 (2012).


\bibitem{Poincare-dual}
R. Bott and L. W. Tu,
\textit{Differential forms in algebraic topology},
Graduate Texts in Mathematics No. 82 (Springer-Verlag, New York-Berlin, 1982).

\bibitem{cochain}
B. Iversen, 
\textit{Cohomology of sheaves}, Universitext (Springer-Verlag, Berlin, 1986). 

\bibitem{motion_picture}
S. Kamada,
\textit{Braid and knot theory in dimension four}, 
Mathematical Surveys and Monographs 95 (American Mathematical Society, Providence, 2002). 

\bibitem{Eckart}
C. Eckart, \textit{Hydrodynamics of oceans and atmospheres},
(Pergamon Press, London, 1960).

\bibitem{Mobbs}
S. D. Mobbs, J. Fluid Mech. {\bf 108}, 475 (1981).

\bibitem{Fukumoto}
Y. Fukumoto and H. Sakuma,
Procedia IUTAM \textbf{7}, 213 (2013).


\bibitem{Bateman}
H. Bateman, \textit{The mathematical analysis of electrical and optical wave-motion}
(Dover, New York, 1915).

\bibitem{Kedia}
H. Kedia, I. Bialynicki-Birula, D. Peralta-Salas, and W. T. M. Irvine,
Phys. Rev. Lett. \textbf{111}, 150404 (2013).

\bibitem{Noether1}
R. Salmon, Ann. Rev. Fluid Mech. {\bf 20}, 225 (1988).

\bibitem{Noether2}
A. Yahalom, J. Math. Phys. {\bf 36}, 1324 (1995).

\bibitem{Noether3}
N. Padhye and P. J. Morrison, Phys. Lett. A {\bf 219}, 287 (1996).

\bibitem{Noether4}
N. Padhye and P. J. Morrison, Plasma. Phys. Rep {\bf 22}, 960 (1996).



\end{thebibliography}
\end{document}